\newcommand{\fde}[1]{\frac\de{\de #1}}
\newcommand{\lde}[1]{\overrightarrow{\de{#1}}}
\newcommand{\rde}[1]{\overleftarrow{\de{#1}}}
\newcommand{\fullref}[1]{\ref{#1} on page~\pageref{#1}}
\newcommand{\ndash}{\nobreakdash-\hspace{0pt}}
\newcommand{\Ndash}{\nobreakdash--}
\newcommand{\dd}{{\mathrm{d}}}
\newcommand{\EE}{\mathrm{e}}
\DeclareMathOperator{\Dens}{Dens}
\DeclareMathOperator{\Ber}{Ber}
\newcommand{\Div}{{\mathrm{div}}}
\DeclareMathOperator{\End}{End}
\newtheorem{Thm}{Theorem}[section]
\newtheorem{Prop}[Thm]{Proposition}
\newtheorem{Lem}[Thm]{Lemma}
\newtheorem{Cor}[Thm]{Corollary}
\newtheorem*{Thm*}{Theorem}
\newtheorem*{Lem*}{Lemma}
\theoremstyle{remark}
\newtheorem*{Ack}{Acknowledgment}
\theoremstyle{definition}
\newtheorem{Rem}[Thm]{Remark}
\newtheorem{Def}[Thm]{Definition}
\newtheorem{ThmDef}[Thm]{Theorem/Definition}
\newcommand{\LL}{\mathsf{L}}
\newcommand{\bbR}{{\mathbb{R}}}
\newcommand{\bbZ}{{\mathbb{Z}}}
\newcommand{\de}{\partial}
\newcommand{\calD}{\mathcal{D}}
\newcommand{\calL}{\mathcal{L}}
\newcommand{\calM}{\mathcal{M}}
\newcommand{\calN}{\mathcal{N}}
\def\gpd{\,\lower1pt\hbox{$\longrightarrow$}\hskip-.24in\raise2pt
               \hbox{$\longrightarrow$}\,}
\let\Tilde=\widetilde
\let\Hat=\widehat
\newcommand\qq{}
\newcommand\cmp[1]{{\qq Commun.\ Math.\ Phys.\ \bf #1}}
\newcommand\jmp[1]{{\qq J.\ Math.\ Phys.\ \bf #1}}
\newcommand\pl[1]{{\qq Phys.\ Lett.\ \bf #1}}
\newcommand\mpl[1]{{\qq Mod.\ Phys.\ Lett.\ \bf #1}}
\newcommand\lmp[1]{{\qq Lett.\ Math.\ Phys.\ \bf #1}}
\begin{document}
\title{The Canonical BV Laplacian on Half\ndash Densities}

\begin{abstract}
This is a didactical review on the canonical BV Laplacian on half\ndash densities.
\end{abstract}

\author{Alberto S. Cattaneo}
\address{Institut f\"ur Mathematik, Universit\"at Z\"urich\\
Winterthurerstrasse 190, CH-8057 Z\"urich, Switzerland}  
\email{cattaneo@math.uzh.ch}

\epigraph{\emph{Das Bekannte \"uberhaupt ist darum, weil es bekannt ist, nicht erkannt}}{GWFH}

\thanks{ I acknowledge partial support of the SNF Grant No.\ 200020\_192080, of the Simons Collaboration on Global Categorical Symmetries, and of the COST Action 21109 
(CaLISTA). This research was (partly) supported by the NCCR SwissMAP, funded by the Swiss National Science Foundation.}




\maketitle

\tableofcontents


\section{Introduction}
The BV Laplacian on functions is a central ingredient of the Batalin--Vilkovisky (BV) formalism. Its global definition is not canonical, yet
a remarkable fact observed by Khudaverdian \cite{Khu99,Khu04} is that
its extension to half\ndash densities is so, 
and there are various ways to show it. 
In this review we give a short, self-contained presentation which requires just a few simple computations. It gives a special emphasis on the fact that half-densities have a natural inner product.

This note has no pretence of originality. In Appendix~\ref{a:historem}, we give a short historical overview with the relevant references.

In Appendix~\ref{a:somebackgroundmaterial}, we collect some technical background material.
In Appendix~\ref{a:app}, we recall why the BV formalism is important in applications.


\subsection{Overview}
We recall definitions and properties of BV Laplacians  (on functions and densities) on an oriented odd symplectic manifold with the goal of
proving that on half\ndash densities it is canonically defined (whereas on other kinds of densities, including functions, it requires the choice of a compatible density). This goes as follows:
\begin{enumerate}
\item On functions we define $\Delta_\mu f = \frac12 \Div_\mu X_f$, where $\mu$ is an even, nowhere vanishing\footnote{Throughout the paper, nowhere vanishing means nonvanishing at every point of the body after the nilpotents are set to zero. 
An even, nowhere vanishing density $\mu$ is the same as a basis for the module of densities. The $k$\ndash density $\mu^k$ is then a basis for the module of $k$\ndash densities.} density, and $X_f$ denotes the hamiltonian vector field of $f$.
\item On half-densities we define $\Delta^{(\frac12)}_\mu \sigma = \Delta_\mu(\sigma / \mu^{1/2}) \mu^{1/2}$.
\item We check that the leading term 
is invariant under symplectomorphisms and
does not depend on the choice of $\mu$.
\item\label{over:symm} We check that $\Delta^{(\frac12)}_\mu$ is a symmetric operator. This implies that in any Darboux chart 
only the zeroth-order term 
may depend on $\mu$.
\item We show that for a compatible density $\mu$, i.e., for $\mu$ satisfying $\Delta_\mu^2=0$, the zeroth-order term of the previous point must vanish. This shows that $\Delta^{(\frac12)}_\mu$ does not depend on the choice of the compatible density $\mu$.
\item We show, via the presentation of the given odd symplectic manifold as a symplectomorphic odd cotangent bundle, that compatible densities always exist.
\item We conclude that the BV Laplacian on half-densities, with the property of squaring to zero, is canonically defined.
\item As an aside, we also show how to define this canonical operator, denoted by $\Delta$, directly on Darboux charts without any choice involved. Moreover, we prove that the compatibility condition for a density $\mu$ can also be expressed as $\Delta\mu^{\frac12}=0$
and that the BV Laplacian on functions can now be recovered as
$\Delta_\mu = \Delta(f\mu^{\frac12})/\mu^{\frac12}$.
\end{enumerate}

We go through details, also checking signs carefully, to make this presentation useful also as a reference. However, it should be observed that, roughly, the above flow can be easily  checked without going into sign details---the only crucial point being (\ref{over:symm}), which requires some care.

\begin{Ack}
I thank A.~Cabrera, D.~Fiorenza, N.~Moshayedi, M.~Schiavina, and P.~\v Severa for useful remarks. I am especially grateful to T.~Voronov for several extremely helpful exchanges.
\end{Ack}

\section{The standard BV Laplacian}\label{s-standLap}
On functions of odd variables $p_1,\dots,p_n$ and even variables $q^1,\dots,q^n$, the standard BV Laplacian\footnote{This operator was introduced to show that certain integrals are invariant under deformations of the integration domain; see Appendix~\ref{a:app}.} (a.k.a.\ BV Laplace operator or BV operator) is defined as
\[
\triangle\coloneqq \de_i\de^i,
\]
where we use Einstein's summation convention and the shorthand notations
\[
\de_i = \fde{q^i}\quad\text{and}\quad\de^i=\fde{p_i}.
\]

Another ingredient of the BV formalism is the BV bracket
\begin{equation}\label{standBVbrack}
(f,g) \coloneqq f\rde{^i}\,\lde{_i^{\phantom{i}}}g-f\rde{_i^{\phantom{i}}}\,\lde{^i}g,
\end{equation}
where $f$ and $g$ are functions of homogeneous degree, and
the arrows denote the directions the derivatives are applied from. Explicitly, $\lde{_i^{\phantom{i}}}f=\de_if$, $\lde{^i}f=-\de^if$, and
\[
\begin{split}
f\rde{_i^{\phantom{i}}} &=\de_i f,\\
f\rde{^i} &= -(-1)^f\de^if.
\end{split}
\]
Here and in the following, when we put a function (or any other object) as an exponent of $(-1)$ we mean its degree.\footnote{\label{f:signs}A more precise but more cumbersome notation would be $(-1)^{|f|}$, where $|f|$ denotes the degree of $f$.}

The standard BV Laplacian has the following properties which are easily proved by direct computation:
\begin{subequations}\label{es:triangle}
\begin{gather}
\triangle^2=0,\label{e:trianglesquare}\\
\triangle(fg)=(\triangle f)g +(-1)^f f\triangle g - (-1)^f (f,g),\label{e:trianglefg} \\
\triangle \EE^S = \left(\triangle S-\frac12(S,S)\right)\EE^S.\label{e:triangleE}
\end{gather}
\end{subequations}
Here again $f$ and $g$ are functions of homogeneous degree, whereas 
the function $S$ is assumed to be even. As an exercise, just using \eqref{e:trianglesquare} and \eqref{e:trianglefg}, you can also show that
\begin{equation}\label{e:triangle(fg)}
\triangle(f,g)=(\triangle f,g) - (-1)^f(f,\triangle g).
\end{equation}

The geometric interpretation of the content of this section is that the $p_i$s and $q^i$s are Darboux coordinates for the odd symplectic form $\omega=\dd p_i\dd q^i$ with $(\ ,\ )$ its associated odd Poisson bracket.\footnote{If the $q^i$s are coordinates on an open subset $U$ of $\bbR^n$, the functions of the $p$ and $q$ variables can be identified with multivector fields on $U$. The BV bracket then gets identified with the Schouten--Nijenhuis bracket. We will return to this, with a general basis manifold, in Section~\ref{s:oct}; see footnote~\fullref{f-mv}.} The main problem with this, however, is that the standard BV Laplacian does not transform well under symplectomorphisms, so it cannot be used as such to define an operator on functions on an odd symplectic manifold. In the next sections, we will see how to obviate this problem.

\section{The BV bracket}
Let $(\calM,\omega)$ be an odd symplectic manifold. Given a function $f$, we denote by $X_f$  its hamiltonian vector field defined via
\[
\iota_{X_f}\omega=\dd f.
\]
Given a second function $g$, we define the BV bracket as
\[
(f,g)\coloneqq(-1)^{f+1}X_f(g).
\]
This is an odd Poisson bracket, which in local Darboux coordinates agrees with the one of the previous section:\footnote{This is a simple computation which however requires checking signs. It is just needed to make sure that the conventions we choose are compatible.}
\begin{Lem}\label{l:BVbraDar}
In Darboux coordinates, $\omega=\dd p_i\dd q^i$, the BV bracket agrees with the standard one defined in \eqref{standBVbrack}.
\end{Lem}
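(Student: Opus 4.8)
The plan is to evaluate both sides in the given Darboux chart and match them term by term; the entire content is a careful bookkeeping of Koszul signs. The first thing to pin down is that $q^i$ is even and $p_i$ is odd, so that, as generators of the de Rham complex, $\dd q^i$ is odd while $\dd p_i$ is even. Consequently $\omega=\dd p_i\,\dd q^i$ is odd, as an odd symplectic form should be, and $\iota_{X_f}\omega=\dd f$ forces $X_f$ to carry the same parity as $f$.

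First I would expand the exact form in coordinates, $\dd f=\dd q^i\,\de_i f+\dd p_i\,\de^i f$, with the de Rham differential acting from the left so that $\de_i,\de^i$ are exactly the left derivatives used in the paper. Next, writing the unknown field as $X_f=X^i\de_i+X_i\de^i$ with $X^i=X_f(q^i)$ and $X_i=X_f(p_i)$, I would contract it into $\omega$ by the graded Leibniz rule for $\iota_{X_f}$, using $\iota_{X_f}\dd q^i=X^i$, $\iota_{X_f}\dd p_i=X_i$ and recalling that $\iota_{X_f}$ has parity $\abs{X_f}+1$. This gives $\iota_{X_f}\omega=X_i\,\dd q^i+\dd p_i\,X^i$; moving the components back to the right of the differentials produces the one genuinely delicate sign, and imposing $\iota_{X_f}\omega=\dd f$ then yields
\[
X^i=\de^i f,\qquad X_i=(-1)^{f+1}\de_i f .
\]

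With these components in hand I would substitute into the definition, obtaining $(f,g)=(-1)^{f+1}X_f(g)=(-1)^{f+1}\bigl((\de^i f)(\de_i g)+(-1)^{f+1}(\de_i f)(\de^i g)\bigr)$, i.e.\ $(f,g)=(-1)^{f+1}(\de^i f)(\de_i g)+(\de_i f)(\de^i g)$. Independently I would unfold the standard bracket \eqref{standBVbrack} through the stated dictionary $f\rde{^i}=-(-1)^f\de^i f$, $f\rde{_i^{\phantom{i}}}=\de_i f$, $\lde{_i^{\phantom{i}}}g=\de_i g$, $\lde{^i}g=-\de^i g$, which reproduces exactly the same two terms. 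Comparing the two expressions finishes the proof.

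The only real obstacle is sign discipline, concentrated in three spots: the parity assignment of the coordinate differentials ($\dd q^i$ odd, $\dd p_i$ even), the Koszul sign picked up when commuting the components $X^i,X_i$ past the differentials in the contraction (the source of the $(-1)^{f+1}$ in $X_i$), and the left-versus-right derivative translation in \eqref{standBVbrack}. Each is elementary, but all three must be fixed by one consistent convention; once that is done the identity is immediate and no further structural input is needed.
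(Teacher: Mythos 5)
Your overall route is the same as the paper's---expand $\dd f$ and $\iota_{X_f}\omega$ in the Darboux chart, read off the components of $X_f$, and compare $(-1)^{f+1}X_f(g)$ with \eqref{standBVbrack}---but there is a genuine sign error at precisely the spot you single out as the delicate one. You assert that $\iota_{X_f}\omega=\dd f$ forces $X_f$ to have the \emph{same} parity as $f$. This contradicts your own premises: you (correctly) take $\omega$ to be odd and $\iota_{X_f}$ to have parity $\abs{X_f}+1$, so $\iota_{X_f}\omega$ has parity $\abs{X_f}$, while $\dd f$ has parity $\abs{f}+1$; hence $X_f$ has parity \emph{opposite} to $f$, as the paper states explicitly in the footnote attached to this very proof. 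The Koszul sign picked up when commuting $X_i$ past the odd generator $\dd q^i$ is $(-1)^{\abs{X_i}}$ with $\abs{X_i}=\abs{X_f}+1=\abs{f}$, so the correct component is $(X_f)_i=(-1)^f\de_i f$, as in \eqref{e:Xf}, not $(-1)^{f+1}\de_i f$ as you write. (A quick consistency check: $\de_i f$ has parity $\abs{f}$, which forces $\abs{X_i}=\abs{f}$ and hence again $\abs{X_f}=\abs{f}+1$.)

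You nevertheless land on the standard bracket, but only because of a second, compensating sign choice: you evaluate $X_f(g)$ as $X^i\de_i g+X_i\de^i g$, whereas under the paper's conventions the left derivative in the $p$-direction is $\lde{^i}=-\de^i$, so the vector field acts as $X_f(g)=X^i\lde{_i^{\phantom{i}}}g+X_i\lde{^i}g$. With the paper's component $(-1)^f\de_i f$ and the paper's action the two minus signs in the second term conspire to give $+(\de_i f)(\de^i g)$, matching $-f\rde{_i^{\phantom{i}}}\lde{^i}g$; with your component and your action you also get $+(\de_i f)(\de^i g)$; but mixing your component with the paper's action (or vice versa) flips that sign. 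Since the entire content of this lemma is the sign bookkeeping, you need to correct the parity of $X_f$ and then state explicitly whether the basis vector field dual to $p_i$ acts by $\de^i$ or by $\lde{^i}=-\de^i$. As written, your intermediate formula for $(X_f)_i$ is inconsistent with \eqref{e:Xf}, which the paper reuses later (e.g., in the proof of Proposition~\ref{p:DeltafD}), so the slip is not harmless even if your final display happens to be correct.
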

\begin{proof}
Writing $X= X^i\de_i+X_i\de^i$, we get\footnote{We use the convention of total degree, i.e., internal degree plus parity of the form degree. This means that $\dd$ is odd and that $\iota_X$ has parity opposite to that of $X$. Moreover, $X^i$ has the same parity as $X$, whereas $X_i$ has opposite parity. Finally, $f$ and $X_f$ have opposite parity to each other.}
\[
\iota_{X}\omega = \dd p_i\,X^i + X_i\,\dd q^i = \dd p_i\,X^i - (-1)^X \dd q^i\,X_i.
\]
{}From
\[
\dd f = \dd p_i\, \de^i f + \dd q^i\, \de_i f
\]
we then get
\begin{subequations}\label{e:Xf}
\begin{align}
(X_f)^i &= \de^i f = (-1)^{f+1} f\rde{^i}
\intertext{and}
(X_f)_i &= (-1)^f\de_i f = (-1)^f f\rde{_i^{\phantom{i}}},
\end{align}
\end{subequations}
so $(-1)^{f+1}X_f(g)$ is given by \eqref{standBVbrack}.
\end{proof}

The BV bracket satisfies several properties.
\begin{Prop} For all functions $f$, $g$, and $h$, we have
\begin{align*}
[X_f,X_g] &= X_{(f,g)},\\
(f,gh)&=(f,g)\,h+(-1)^{(f+1)g} g\,(f,h),\\
(g,f) &= -(-1)^{(f+1)(g+1)}(f,g),\\
(f,(g,h)) &= ((f,g),h)+(-1)^{(f+1)(g+1)}(g,(f,h)).
\end{align*}
\end{Prop}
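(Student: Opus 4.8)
The plan is to establish the four identities using the graded Cartan calculus together with the two defining relations $\iota_{X_f}\omega=\dd f$ and $(f,g)=(-1)^{f+1}X_f(g)$, keeping the total-degree conventions of the previous proof (so that $X_f$ has parity $f+1$, the interior product $\iota_{X}$ has parity opposite to $X$, and $\dd$ is odd). It is convenient to reorder the list: the Leibniz rule and the antisymmetry are purely formal, the homomorphism property $[X_f,X_g]=X_{(f,g)}$ is the geometric heart and the only place where $\dd\omega=0$ enters, and the graded Jacobi identity then drops out as a formal consequence of it. Throughout I write $\calL_X$ for the Lie derivative along $X$.

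First the two formal identities. Since $X_f$ is a graded vector field of parity $f+1$, it acts as a graded derivation, so $X_f(gh)=X_f(g)\,h+(-1)^{(f+1)g}g\,X_f(h)$; multiplying by $(-1)^{f+1}$ and substituting the definition of the bracket on each term gives the Leibniz rule. For antisymmetry I would write $X_f(g)=\iota_{X_f}\dd g=\iota_{X_f}\iota_{X_g}\omega$ and use the graded skew-symmetry of the two-form $\omega$ to interchange the two interior products at the cost of a Koszul sign; re-expressing the result through $\iota_{X_g}\iota_{X_f}\omega=X_g(f)$ and converting back with $(f,g)=(-1)^{f+1}X_f(g)$ produces $(g,f)=-(-1)^{(f+1)(g+1)}(f,g)$. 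The only delicate point is that the two vector fields carry parities $f+1$ and $g+1$, so these (not $f$ and $g$) are what enter the sign.

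For the homomorphism property I would first show that Hamiltonian vector fields preserve $\omega$: by Cartan's formula and $\dd\omega=0$,
\[
\calL_{X_f}\omega=\dd\,\iota_{X_f}\omega\pm\iota_{X_f}\dd\omega=\dd\dd f=0.
\]
Then, applying the graded identity $\iota_{[X_f,X_g]}=[\calL_{X_f},\iota_{X_g}]$ to $\omega$ and dropping the term that contains $\calL_{X_f}\omega=0$, I get $\iota_{[X_f,X_g]}\omega=\calL_{X_f}\iota_{X_g}\omega=\calL_{X_f}\dd g=\dd\,\calL_{X_f}g$, the last equality because $\calL_{X_f}$ commutes with $\dd$. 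Converting $\calL_{X_f}g$ into the bracket and comparing with $\iota_{X_{(f,g)}}\omega=\dd(f,g)$, nondegeneracy of $\omega$ yields $[X_f,X_g]=X_{(f,g)}$.

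Finally, the Jacobi identity follows formally. Expanding each term with the definition of the bracket turns $(f,(g,h))$, $(g,(f,h))$, and $((f,g),h)$ into $X_fX_g h$, $X_gX_f h$, and $X_{(f,g)}h$ up to the common sign $(-1)^{f+g}$, and the homomorphism property rewrites the last as $[X_f,X_g]h$; the asserted identity is then exactly the definition of the graded commutator $[X_f,X_g]=X_fX_g-(-1)^{(f+1)(g+1)}X_gX_f$ applied to $h$. The main obstacle throughout is sign bookkeeping in the total-degree convention, concentrated in the Cartan-calculus step for the homomorphism property; as an independent check one can verify all four identities directly in Darboux coordinates, where by Lemma~\ref{l:BVbraDar} the bracket is the explicit expression \eqref{standBVbrack}.
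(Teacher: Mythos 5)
Your route is the paper's route: the paper also proves only the first identity, via the same Cartan chain $\iota_{[X_f,X_g]}\omega=[\LL_{X_f},\iota_{X_g}]\omega=\LL_{X_f}\iota_{X_g}\omega=\LL_{X_f}\dd g=\dots=\iota_{X_{(f,g)}}\omega$, and disposes of the other three exactly as you do (formal consequences of the definitions, or a check in Darboux charts via Lemma~\ref{l:BVbraDar}); your write-up just fills in those formal derivations, correctly.

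One concrete correction in the key step: $\LL_{X_f}$ does not commute with $\dd$ on the nose, only in the graded sense. Since $X_f$ has parity $f+1$ and $\dd$ is odd, the correct relation is $\LL_{X_f}\dd g=(-1)^{f+1}\dd\,\LL_{X_f}g$, and this Koszul sign is not optional: it is precisely what cancels the $(-1)^{f+1}$ coming from $(f,g)=(-1)^{f+1}X_f(g)$, so that one lands on $\dd(f,g)=\iota_{X_{(f,g)}}\omega$ rather than on $(-1)^{f+1}\iota_{X_{(f,g)}}\omega$. As literally displayed, your equation $\calL_{X_f}\dd g=\dd\,\calL_{X_f}g$ is false for even $f$ (odd $X_f$) and would give $[X_f,X_g]=(-1)^{f+1}X_{(f,g)}$. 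You flag sign bookkeeping as the main hazard, and this is exactly the spot where it bites; with that sign restored the argument is complete.
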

The last three properties say that $(\ ,\ )$ is an odd Poisson bracket,\footnote{In particular this means that $(\ ,\ )$ is a super Lie bracket with respect to the opposite parity of its arguments.} whereas the first says the the map $f\mapsto X_f$ is a morphism of super Lie algebras.
\begin{proof}
We prove only the first identity. The others are also easily obtained from the definitions (or recovered from the identical identities for the standard BV bracket which one obtains in each Darboux chart). We have
\begin{gather*}
\iota_{[X_f,X_g]}\omega= [\LL_{X_f},\iota_{X_g}]\omega = \LL_{X_f}\iota_{X_g}\omega = \LL_{X_f}\dd g\\
= (-1)^{f+1}\dd\LL_{X_f}g = (-1)^{f+1}\dd X_f(g) = \dd(f,g) = \iota_{X_{(f,g)}}\omega,
\end{gather*}
where $\LL_{X_f}$ denotes the Lie derivative with respect to $X_f$.
\end{proof}

\section{The BV Laplacian on functions}\label{s:BVlapf}
Let $(\calM,\omega)$ be an odd symplectic manifold and
$\mu$  an even, nowhere vanishing density on $\calM$, which we assume to be orientable. We then define the $\mu$\ndash dependent BV Laplacian as
\begin{equation}\label{e:Deltamudiv}
\Delta_\mu f\coloneqq \frac12\Div_\mu X_f.
\end{equation}
Recall (or see Appendix~\ref{a:dens}) that the divergence operator of a vector field $X$ with respect to an even, nowhere vanishing density $\mu$ is defined via 
\[
\Div_\mu X \,\mu= \LL_X\mu,
\]
where $\LL_X$ denotes the Lie derivative.

Again we have to make sure that this definition of the BV Laplacian agrees with the standard one in the appropriate case:
\begin{Prop}\label{p:DeltafD}
In Darboux coordinates, $\omega=\dd p_i\dd q^i$, with standard density $\mu_\text{stand}\coloneqq \dd^nq\,\dd^np$, we have
\[
\Delta_{\mu_\text{stand}} = \triangle.
\]
\end{Prop}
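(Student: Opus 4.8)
The plan is to reduce everything to the explicit coordinate description of $X_f$ already obtained in \eqref{e:Xf} and then evaluate the divergence against the constant-coefficient density $\mu_\text{stand}$. Writing $X_f=(X_f)^i\de_i+(X_f)_i\de^i$, the relations \eqref{e:Xf} give $(X_f)^i=\de^i f$ for the even ($q$) directions and $(X_f)_i=(-1)^f\de_i f$ for the odd ($p$) directions. What remains is to feed these into a coordinate formula for $\Div_{\mu_\text{stand}}$ and to check that the two contributions add up to $2\de_i\de^i f=2\triangle f$, so that the prefactor $\tfrac12$ in \eqref{e:Deltamudiv} produces exactly $\triangle$.

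The first genuine step is therefore to establish the coordinate expression for the divergence of a (possibly odd) vector field with respect to $\mu_\text{stand}$. I would obtain this either from the infinitesimal Berezinian of the super-Jacobian of the flow (whose supertrace carries a relative sign between the even and odd blocks) or, more robustly, by integration by parts under the Berezin integral, using that $\int\LL_X(g\,\mu_\text{stand})=0$ for compactly supported $g$. Either route yields
\[
\Div_{\mu_\text{stand}}X=\de_i X^i-(-1)^{X}\de^i X_i .
\]
It is essential that the sign in front of the odd ($p$) block depends on the parity of $X$: integrating by parts in $p_i$ invokes the graded Leibniz rule, and the sign it produces differs according to whether $X$, hence its $p$-component $X_i$, is even or odd.

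Substituting \eqref{e:Xf} and recalling that $X_f$ has parity $|f|+1$ opposite to $f$, the odd block carries the sign $-(-1)^{f+1}=(-1)^{f}$, so
\[
\Div_{\mu_\text{stand}}X_f=\de_i\de^i f+(-1)^{f}\de^i\bigl((-1)^{f}\de_i f\bigr)=\de_i\de^i f+\de^i\de_i f .
\]
Because $\de^i$ and $\de_i$ differentiate distinct variables, their graded commutator vanishes, so $\de^i\de_i f=\de_i\de^i f$ and the two terms coincide; dividing by $2$ gives $\Delta_{\mu_\text{stand}}f=\de_i\de^i f=\triangle f$.

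The only real obstacle is sign bookkeeping, and it sits entirely in the divergence formula. The naive guess that the odd block always enters with the fixed sign appropriate to an even vector field would, applied to the odd field $X_f$, make the two contributions cancel and give $0$; the correct parity-dependent sign is precisely what makes them reinforce. I would thus concentrate the care on justifying the factor $-(-1)^{X}$ and on tracking the two $(-1)^f$ factors—one coming from $(X_f)_i$ in \eqref{e:Xf} and one from the divergence—which multiply to $+1$. Everything else, including the appeal to Appendix~\ref{a:dens} for the definition of $\Div_\mu$, is a one-line substitution.
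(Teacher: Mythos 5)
Your proposal is correct and follows essentially the same route as the paper: it quotes the coordinate divergence formula \eqref{se:divstand} (whose derivation via the Berezinian of the flow is carried out in Appendix~\ref{a:dens}), substitutes the components of $X_f$ from \eqref{e:Xf}, and checks that the two $(-1)^f$ factors cancel so the two terms add to $2\de_i\de^i f$. The extra care you devote to the parity-dependent sign $-(-1)^X$ is exactly the point the paper delegates to its appendix lemma, so there is no gap.
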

\begin{proof}
With the standard density the divergence of a vector field $X= X^i\de_i+X_i\de^i$ is given by \eqref{se:divstand}, i.e.,
\[
\Div_{\mu_\text{stand}} X = \de_i X^i-(-1)^X\de^iX_i.
\]
If we now insert $X_f$ as calculated in \eqref{e:Xf},
we get
\[
\Div_{\mu_\text{stand}} X_f = 2\de_i\de^if.
\]
\end{proof}

Next we check how the operator depends on the choice of density. Note that, given two even, nowhere vanishing densities $\mu$ and $\Tilde\mu$, there is a unique even, nowhere vanishing function $h$ such that $\Tilde\mu=h\mu$.
\begin{Prop}\label{p:Deltahmu} We have
\[
\Delta_{h\mu} = \Delta_\mu +\frac12\frac1h X_h.
\]
\end{Prop}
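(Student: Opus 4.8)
The plan is to reduce the statement to the transformation law of the divergence operator under rescaling of the reference density, and only at the very end to rewrite the resulting correction term using the BV bracket. Since $\Delta_\mu f = \frac12\Div_\mu X_f$ and $\Delta_{h\mu} f = \frac12\Div_{h\mu}X_f$, it suffices to understand how $\Div_{h\mu}X$ relates to $\Div_\mu X$ for an arbitrary vector field $X$, and then to specialize to $X = X_f$.

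First I would expand $\LL_X(h\mu)$ by the graded Leibniz rule for the Lie derivative. As $h$ is even, no sign appears and one gets $\LL_X(h\mu) = (X h)\,\mu + h\,\LL_X\mu = \bigl(X h + h\,\Div_\mu X\bigr)\mu$. Comparing this with the defining relation $\LL_X(h\mu) = \Div_{h\mu}X\,(h\mu)$ and dividing through by the nowhere vanishing even function $h$ yields $\Div_{h\mu}X = \Div_\mu X + \frac1h\,X(h)$. Taking $X = X_f$ and multiplying by $\frac12$ gives $\Delta_{h\mu}f = \Delta_\mu f + \frac1{2h}\,X_f(h)$.

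It remains to identify the correction term $\frac1{2h}X_f(h)$ with $\frac12\frac1h X_h$ applied to $f$, that is, to show $X_f(h) = X_h(f)$. This is where the sign bookkeeping is delicate and is the step I expect to require the most care. I would rewrite both sides through the BV bracket via $X_f(g) = (-1)^{f+1}(f,g)$: then $X_f(h) = (-1)^{f+1}(f,h)$ while $X_h(f) = -(h,f)$, using that $h$ is even. The graded antisymmetry $(h,f) = -(-1)^{(f+1)(h+1)}(f,h)$ collapses, again because $h$ is even, to $(h,f) = -(-1)^{f+1}(f,h)$, whence $X_h(f) = (-1)^{f+1}(f,h) = X_f(h)$, as needed. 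Substituting back gives exactly $\Delta_{h\mu} = \Delta_\mu + \frac12\frac1h X_h$.
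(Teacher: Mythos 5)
Your proof is correct and follows essentially the same route as the paper: reduce to the rescaling law $\Div_{h\mu}X=\Div_\mu X+\frac1h X(h)$ and then use the graded antisymmetry of the BV bracket together with the evenness of $h$ to identify $X_f(h)=X_h(f)$. The only difference is cosmetic: you re-derive the divergence rescaling law from the Leibniz rule for $\LL_X(h\mu)$, whereas the paper simply cites point~(\ref{p:div:hmu}) of Proposition~\ref{p:div}.
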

\begin{proof}
The divergence operator depends on the choice of density as
\[
\Div_{h\mu} X_f = \Div_\mu X_f + \frac1hX_f(h)
\]
(see point~(\ref{p:div:hmu}) in Proposition~\ref{p:div}). However, $X_f(h)=(-1)^{f+1}(f,h)=-(h,f)=X_h(f)$,
since $h$ is even.
\end{proof}
We can always write $h=\pm\EE^S$ with $S$ an even function. This way the formula simplifies to
\[
\Delta_{h\mu} = \Delta_\mu +\frac12 X_S,
\]
since $X_{\EE^S}=\EE^SX_S$.
In particular, we have the
\begin{Lem}\label{l:DeltaD}
In Darboux coordinates with $\mu=\pm\EE^S\mu_\text{stand}$, 
we have
\begin{equation}\label{e:DeltaD}
\Delta_\mu=\triangle  +\frac12 X_S=\triangle-\frac12(S,\ ).
\end{equation}
\end{Lem}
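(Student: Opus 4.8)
The plan is to specialize the general formula of Proposition~\ref{p:Deltahmu} to the Darboux setting. By Proposition~\ref{p:DeltafD} we already know that with the standard density $\mu_\text{stand}=\dd^nq\,\dd^np$ the operator $\Delta_{\mu_\text{stand}}$ coincides with the standard BV Laplacian $\triangle$. So the essential task is to rewrite a general even, nowhere vanishing density $\mu$ relative to $\mu_\text{stand}$ and track how the Laplacian changes.

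First I would invoke the observation (made just before the lemma) that any even, nowhere vanishing function can be written as $h=\pm\EE^S$ with $S$ even; this is the content of writing $\mu=\pm\EE^S\mu_\text{stand}$, so that $h=\pm\EE^S$ is precisely the conversion factor between $\mu$ and $\mu_\text{stand}$. Then I would apply Proposition~\ref{p:Deltahmu} with $\mu_\text{stand}$ in place of $\mu$ and $h=\pm\EE^S$, giving
\[
\Delta_\mu=\Delta_{\pm\EE^S\mu_\text{stand}}=\Delta_{\mu_\text{stand}}+\frac12\frac1{\pm\EE^S}X_{\pm\EE^S}.
\]
Using $\Delta_{\mu_\text{stand}}=\triangle$ and the relation $X_{\EE^S}=\EE^S X_S$ (which holds because the Hamiltonian vector field is a derivation composed with $\iota_\bullet\omega=\dd\bullet$, and $\dd\EE^S=\EE^S\dd S$), the sign $\pm$ and the factor $\EE^S$ cancel, leaving $\Delta_\mu=\triangle+\frac12 X_S$.

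Finally I would convert the $X_S$ term into bracket notation. From the definition $(S,\ )=(-1)^{S+1}X_S(\ )$ and the fact that $S$ is even, so $(-1)^{S+1}=-1$, one gets $X_S=-(S,\ )$, hence $\frac12 X_S=-\frac12(S,\ )$, yielding the claimed second equality. This step is a one\ndash line sign check against the definition of the BV bracket given in the previous section.

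There is no real obstacle here: the whole statement is a corollary of Proposition~\ref{p:Deltahmu} combined with Proposition~\ref{p:DeltafD}. The only point demanding mild care is the sign bookkeeping in the last step, namely confirming that the evenness of $S$ makes $(-1)^{S+1}=-1$ so that the bracket form carries a genuine minus sign; this is exactly the kind of sign detail the introduction warns about, but it is entirely mechanical.
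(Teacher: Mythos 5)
Your proposal is correct and follows exactly the route the paper takes: the lemma is stated there as an immediate consequence of Proposition~\ref{p:Deltahmu} (in the simplified form $\Delta_{h\mu}=\Delta_\mu+\frac12 X_S$ for $h=\pm\EE^S$, using $X_{\EE^S}=\EE^S X_S$) combined with Proposition~\ref{p:DeltafD}. Your final sign check $X_S=-(S,\ )$ for even $S$, coming from $(f,g)=(-1)^{f+1}X_f(g)$, is also right.
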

This, together with Proposition~\ref{p:Deltahmu}, gives the
\begin{Prop}\label{p:DeltaD}
The BV Laplacian on functions is a second-order differential operator, and its leading term is independent of the choice of density.
\end{Prop}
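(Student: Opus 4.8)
The plan is to read off both assertions from the local normal form already established in Lemma~\ref{l:DeltaD}, together with the change-of-density formula of Proposition~\ref{p:Deltahmu}; the statement is then essentially a corollary, so I do not expect a genuine obstacle, only a couple of structural remarks to make.

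For the first assertion I would argue locally. By Darboux's theorem every point of $\calM$ has a neighbourhood with coordinates in which $\omega=\dd p_i\,\dd q^i$. On such a chart $\mu$ and the standard density $\mu_\text{stand}$ differ by an even, nowhere-vanishing function, which we may write as $\pm\EE^S$, so Lemma~\ref{l:DeltaD} gives
\[
\Delta_\mu=\triangle+\tfrac12 X_S .
\]
The first summand $\triangle=\de_i\de^i$ is a second-order operator and the vector field $\tfrac12 X_S$ is first-order; hence $\Delta_\mu$ is a differential operator of order two in each Darboux chart. Since the order of a differential operator is a local, diffeomorphism-invariant notion and Darboux charts cover $\calM$, it follows that $\Delta_\mu$ is globally a second-order differential operator. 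Alternatively, and without invoking Darboux's theorem, this is immediate from the definition $\Delta_\mu f=\tfrac12\Div_\mu X_f$: the assignment $f\mapsto X_f$ is first-order in $f$, while $\Div_\mu$ is first-order in the components of a vector field, so the composition is second-order in $f$.

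For the second assertion I would use Proposition~\ref{p:Deltahmu}. Any other even, nowhere-vanishing density equals $h\mu$ for a unique even, nowhere-vanishing function $h$, and
\[
\Delta_{h\mu}-\Delta_\mu=\tfrac12\,\tfrac1h\,X_h
\]
is a first-order operator, being $\tfrac1h$ times the derivation $X_h$. Two differential operators that differ by an operator of lower order share the same leading (second-order) part; therefore the leading term of $\Delta_\mu$ does not depend on the choice of density. Combined with the local computation above, this identifies the leading term with $\triangle$ in every Darboux chart, in particular with constant coefficients. The only points that need a word of justification are these two structural facts---that "second-order differential operator" is a local, diffeomorphism-invariant notion, and that the leading term is unaffected by adding a lower-order operator---and neither presents a real difficulty.
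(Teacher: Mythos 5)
Your proof is correct and follows essentially the same route as the paper, which derives the proposition directly from Lemma~\ref{l:DeltaD} (the Darboux normal form $\Delta_\mu=\triangle+\tfrac12 X_S$) combined with Proposition~\ref{p:Deltahmu} (changing the density adds only the first-order operator $\tfrac12\tfrac1h X_h$). You have merely spelled out the two structural facts the paper leaves implicit, which is fine.
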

\begin{Rem}\label{r:DeltaD} Under the assumptions of Lemma~\ref{l:DeltaD},
from \eqref{e:DeltaD}, also using \eqref{e:triangle(fg)} and the Jacobi identity, we get
\[
\Delta_\mu^2 f = -\frac12 (F_S,f)
\]
with
\begin{equation}\label{e:FS}
F_S \coloneqq \triangle S-\frac14(S,S).
\end{equation}
We then see that $\Delta_\mu^2=0$ if and only if $F_S$ is constant. As $F_S$ is odd, this happens if and only if
$F_S=0$.\footnote{Note that this condition can also be rephrased as 
$\triangle\EE^{\frac12 S}=0$.
The appearance of the factor $\frac12$ in the exponent will become clear when we study half\ndash densities (see Lemma~\ref{l:hattilde}).} Therefore, in general the property of equation \eqref{e:trianglesquare} does not extend to the global BV Laplacian.
\end{Rem}

\begin{Def}
We say that an even, nowhere vanishing density 
$\mu$ is compatible with the odd symplectic structure if $\Delta_\mu^2=0$. 
\end{Def}
\begin{Rem}
We will see that every odd symplectic manifold admits a compatible density.
\end{Rem}

\begin{Rem}\label{r:tildeDelta}
As we have seen in Remark~\ref{r:DeltaD}, the property of equation \eqref{e:trianglesquare} does not extend in general. One can try to obviate it by modifying the definition of the BV operator. For example, in Darboux coordinates with $\mu=\pm\EE^S\mu_\text{stand}$, we could define
\[
\Tilde\Delta_\mu\coloneqq \Delta_\mu + F_S = \triangle  +\frac12 X_S + \frac12 F_S
\]
with $F_S$ as in \eqref{e:FS}. It is an easy exercise (show first that $\Delta_\mu F_S=0$ for every $S$) to show that $\Tilde\Delta_\mu^2=0$ for every $\mu$. However, since $\Tilde\Delta_\mu$ is not of the form in Lemma~\ref{l:DeltaD} (unless, of course, $F_S=0$), it is not a BV Laplacian.
\end{Rem}

If $\Psi\colon\calM\to\calN$ is a diffeomorphism, we have the pushforward
\[
\Psi_*\colon\End(C^\infty(\calM))\to\End(C^\infty(\calN))
\]
defined by $\Psi_*P=\Psi_*\circ P\circ \Psi_*^{-1}$, where on the right-hand side $\Psi_*$ denotes the pushforward of functions.
The BV Laplacian transforms nicely under symplectomorphisms.
\begin{Prop}\label{p:psistarDeltaf}
For every symplectomorphism $\Psi$,
\[
\Psi_*\Delta_\mu=\Delta_{\Psi_*\mu}.
\]
\end{Prop}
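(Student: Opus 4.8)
The plan is to factor $\Delta_\mu$ into its constituent operations---the assignment $f\mapsto X_f$, the divergence $\Div_\mu$, and the Lie derivative hidden in the latter---and to observe that each is natural under the diffeomorphism $\Psi$, with the symplectomorphism hypothesis entering exactly once. Since $\Psi_*\Delta_\mu = \Psi_*\circ\Delta_\mu\circ\Psi_*^{-1}$, I would evaluate it on an arbitrary function $g$ on $\calN$ by setting $f\coloneqq\Psi_*^{-1}g$ (so that $\Psi_* f = g$) and tracking how $\frac12\Div_\mu X_f$ transforms.

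First I would prove the naturality of the Hamiltonian vector field, $\Psi_* X_f = X_{\Psi_* f}$. Using that $\Psi_*$ commutes with $\dd$ and with the interior product, $\iota_{\Psi_* X}(\Psi_*\alpha)=\Psi_*(\iota_X\alpha)$, together with the symplectomorphism condition $\Psi_*\omega=\omega$, one computes
\[
\iota_{\Psi_* X_f}\omega = \iota_{\Psi_* X_f}(\Psi_*\omega) = \Psi_*(\iota_{X_f}\omega) = \Psi_*\dd f = \dd\,\Psi_* f,
\]
and uniqueness of the Hamiltonian vector field gives $\Psi_* X_f = X_{\Psi_* f}$. This is the only place where the hypothesis that $\Psi$ is a symplectomorphism, and not merely a diffeomorphism, is used, and it is the conceptual crux of the statement.

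Next I would prove the naturality of the divergence, $\Div_{\Psi_*\mu}(\Psi_* X)=\Psi_*(\Div_\mu X)$. Pushing the defining relation $\Div_\mu X\,\mu=\LL_X\mu$ forward and using the naturality of the Lie derivative, $\LL_{\Psi_* X}(\Psi_* T)=\Psi_*(\LL_X T)$, yields
\[
\Psi_*(\Div_\mu X)\,\Psi_*\mu = \Psi_*(\LL_X\mu) = \LL_{\Psi_* X}(\Psi_*\mu) = \Div_{\Psi_*\mu}(\Psi_* X)\,\Psi_*\mu,
\]
and the claim follows since $\Psi_*\mu$ is again even and nowhere vanishing, hence cancellable (this also guarantees $\Delta_{\Psi_*\mu}$ is defined).

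Assembling these, for any $g$ on $\calN$ with $f=\Psi_*^{-1}g$ I obtain
\[
\Psi_*\Delta_\mu(g) = \frac12\Psi_*\bigl(\Div_\mu X_f\bigr) = \frac12\Div_{\Psi_*\mu}\bigl(\Psi_* X_f\bigr) = \frac12\Div_{\Psi_*\mu} X_{\Psi_* f} = \frac12\Div_{\Psi_*\mu} X_g = \Delta_{\Psi_*\mu}(g),
\]
using divergence naturality in the second step and Hamiltonian naturality in the third. I expect the only real obstacle to be the bookkeeping of Koszul signs: in the $\bbZ_2$\ndash graded setting the naturality identities for $\iota$ and $\LL$ can a priori produce parity factors, and one must verify these are consistent and cancel. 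Since all three ingredients are defined coordinate\ndash freely and $\Psi_*$ is conjugation by an even diffeomorphism, I expect no sign to survive and the identity to hold exactly as stated.
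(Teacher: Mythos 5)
Your proposal is correct and follows essentially the same route as the paper: the paper's proof likewise reduces the statement to the naturality of the divergence under diffeomorphisms (its Proposition~\ref{p:div}, point~(\ref{p:div:Psi}), which rests on the naturality of the Lie derivative) together with the identity $\Psi_*X_f=X_{\Psi_*f}$ for symplectomorphisms. You merely supply the short derivations of these two ingredients that the paper cites or states without proof, and your sign worries are indeed vacuous since $\Psi$ is an even diffeomorphism.
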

\begin{proof}
The divergence operator changes under a diffeomorphism as
\[
\Psi_*\Div_\mu X_f= \Div_{\Psi_*\mu} \Psi_*X_f
\]
(see point~(\ref{p:div:Psi}) in Proposition~\ref{p:div}). If $\Psi$ is a symplectomorphism, we also have $\Psi_*X_f=X_{\Psi_*f}$.
\end{proof}
\begin{Rem}
In conjunction with Proposition~\ref{p:DeltaD}, this implies that under a symplectomorphism $\Psi\colon\calM\to\calM$, the leading term of the BV Laplacian on functions is invariant under symplectomorphisms. 
\end{Rem}
\begin{Rem}\label{r:symplsendcomp}
The proposition also implies that a symplectomorphism sends compatible densities to compatible densities.
\end{Rem}

\subsection{Digression: further properties of the BV Laplacian}
Even though we are not going to use this in the following, it is good to know that the properties stated in equations \eqref{e:trianglefg} and \eqref{e:triangleE} also hold for the global BV Laplacian.
\begin{Prop}\label{p:Deltaprop} For every even, nowhere vanishing density $\mu$, every functions $f$ and $g$, and every even function $S$, we have
\begin{subequations}
\begin{gather}
\Delta_\mu(fg)=(\Delta_\mu f)g +(-1)^f f\Delta_\mu g - (-1)^f (f,g),\label{e:Deltafg} \\
\Delta_\mu \EE^S = \left(\Delta_\mu S-\frac12(S,S)\right)\EE^S.\label{e:DeltaE}
\end{gather}
\end{subequations}
\end{Prop}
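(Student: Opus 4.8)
The plan is to reduce both identities to their standard counterparts \eqref{e:trianglefg} and \eqref{e:triangleE} by working in a Darboux chart. Since $\Delta_\mu$ is a second-order differential operator (Proposition~\ref{p:DeltaD}) and both \eqref{e:Deltafg} and \eqref{e:DeltaE} are pointwise identities, it suffices to verify them on an arbitrary Darboux chart. On such a chart I write $\mu=\pm\EE^T\mu_\text{stand}$ for a suitable even function $T$; this is possible because any even, nowhere vanishing density differs from $\mu_\text{stand}$ by an even, nowhere vanishing function, which in turn can be written as $\pm\EE^T$, exactly as in the discussion preceding Proposition~\ref{p:Deltahmu}. By Lemma~\ref{l:DeltaD} I then have the local expression $\Delta_\mu=\triangle-\frac12(T,\ )$ with $T$ even, and this is the formula I will substitute into both sides.

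For \eqref{e:Deltafg} I apply \eqref{e:trianglefg} to the term $\triangle(fg)$ and the graded Leibniz rule of the BV bracket to the term $-\frac12(T,fg)$. Because $T$ is even, the Leibniz rule gives $(T,fg)=(T,f)g+(-1)^f f(T,g)$, so its sign matches the one appearing in \eqref{e:trianglefg}. Regrouping the terms carrying $f$ (respectively $g$) then recombines $\triangle$ with $-\frac12(T,\ )$ into $\Delta_\mu$ acting on $f$ (respectively $g$), while the bracket term $-(-1)^f(f,g)$ produced by \eqref{e:trianglefg} survives unchanged, yielding precisely \eqref{e:Deltafg}.

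For \eqref{e:DeltaE} I apply \eqref{e:triangleE} to $\triangle\EE^S$ and use that the BV bracket is a derivation in its second slot, so that $(T,\EE^S)=\EE^S(T,S)$ for $S$ even. Then $\Delta_\mu\EE^S=\bigl(\triangle S-\frac12(S,S)\bigr)\EE^S-\frac12\EE^S(T,S)=\bigl(\Delta_\mu S-\frac12(S,S)\bigr)\EE^S$, where the final step merely reassembles $\triangle S-\frac12(T,S)=\Delta_\mu S$.

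The computation is essentially routine; the only point demanding attention is the sign bookkeeping, and the single structural fact that makes everything fall into place is that $T$ is \emph{even}, so all the graded Leibniz signs collapse to $(-1)^f$, matching the standard identities term by term. I therefore expect no genuine obstacle beyond carefully tracking these signs.
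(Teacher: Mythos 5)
Your proposal is correct and is precisely the first of the two routes the paper indicates: its proof opens by noting that, thanks to \eqref{e:DeltaD}, in every Darboux chart $\Delta_\mu$ differs from $\triangle$ by a (hamiltonian) vector field, which is exactly the reduction to \eqref{e:trianglefg} and \eqref{e:triangleE} that you carry out, and your sign bookkeeping (all Leibniz signs collapsing to $(-1)^f$ because $T$ is even) checks out. The only difference is that the paper chooses to work out in detail the alternative, chart-free argument for \eqref{e:DeltaE} via $\Div_\mu(\EE^S X_S)=\EE^S\Div_\mu X_S+X_S(\EE^S)$, whereas you flesh out the local computation it leaves as an observation.
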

\begin{proof}
The identities may be proved by observing, thanks to \eqref{e:DeltaD}, that in every Darboux chart the BV Laplacian differs from $\triangle$ by a vector field.

They may also be proved directly using properties of the divergence operator (see point~(\ref{p:div:f}) in Proposition~\ref{p:div})
and of hamiltonian vector fields. To illustrate this, we prove the second identity. Since $X_{\EE^S}=\EE^SX_S$, we have
\[
\Div_\mu X_{\EE^S} = \Div_\mu(\EE^SX_S) = \EE^S \Div_\mu X_S + X_S(\EE^S),
\]
which proves the identity once we observe that $X_S(\EE^S)=-\EE^S(S,S)$.
\end{proof}

\begin{Rem}
As we have seen in Remark~\ref{r:tildeDelta}, it is possible to define an operator $\Tilde\Delta_\mu$ that squares to zero, no matter what $\mu$ is, so as to satisfy the extension of 
the property of equation \eqref{e:trianglesquare}. However, since this is obtained by adding to $\Delta_\mu$ a multiplication operator, we see that now \eqref{e:Deltafg} is no longer satisfied (unless, of course, $F_S=0$).
\end{Rem}

\section{The BV Laplacian on densities}\label{s:BVlapd}
A differential operator $P$ defined on functions on a (super)manifold $\calM$ can be extended to sections of a trivial line bundle $L$ over $\calM$ once an even, nowhere vanishing section $\lambda$ of $L$ has been chosen. For $\sigma\in\Gamma(L)$ there is a uniquely determined function $f$ such that $\sigma=f\lambda$, and we set
\[
P^{(\lambda)}\sigma\coloneqq P(f)\lambda.
\]
\begin{Lem}\label{l:leadL}
The leading term of the differential operator $P^{(\lambda)}$ does not depend on the choice of $\lambda$.
\end{Lem}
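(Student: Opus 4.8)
The plan is to realise $P^{(\lambda)}$ as a conjugate of $P$ by the tautological trivialisation induced by $\lambda$, and then to show that passing from $\lambda$ to another choice $\lambda'$ conjugates $P$ by multiplication by an even, nowhere vanishing function, an operation that preserves the leading term. Concretely, for an even, nowhere vanishing section $\lambda$ of $L$ I would introduce the module isomorphism $m_\lambda\colon C^\infty(\calM)\to\Gamma(L)$, $f\mapsto f\lambda$. Since by definition $P^{(\lambda)}\sigma=P(f)\lambda$ whenever $\sigma=f\lambda$, we have $P^{(\lambda)}=m_\lambda\circ P\circ m_\lambda^{-1}$.

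Next I would compare two choices. Given a second such section $\lambda'$, there is (exactly as for densities in the remark preceding Proposition~\ref{p:Deltahmu}) a unique even, nowhere vanishing function $h$ with $\lambda'=h\lambda$, and hence $m_{\lambda'}=m_\lambda\circ m_h$, where $m_h$ is multiplication by $h$ on functions. Both extensions are therefore conjugate by the single isomorphism $m_\lambda$ to operators on $C^\infty(\calM)$:
\[
P^{(\lambda)}=m_\lambda\circ P\circ m_\lambda^{-1},\qquad
P^{(\lambda')}=m_\lambda\circ\bigl(m_h\circ P\circ m_h^{-1}\bigr)\circ m_\lambda^{-1}.
\]
Reading off the leading term through the fixed trivialisation $m_\lambda$, it then suffices to show that $P$ and $m_h\circ P\circ m_h^{-1}$ have the same leading term for every even, nowhere vanishing $h$.

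Finally I would invoke the standard fact that the commutator of a differential operator of order $k$ with a zeroth-order multiplication operator has order at most $k-1$. Writing $m_h\circ P\circ m_h^{-1}=P+m_h\circ[P,m_h^{-1}]$ and noting that $[P,m_h^{-1}]$ has order at most $k-1$ while $m_h$ has order $0$, the difference $m_h\circ P\circ m_h^{-1}-P$ has order at most $k-1$, so the top-order part is unchanged. In a coordinate chart this is just the observation that, when $P(h^{-1}\,\cdot\,)$ is expanded by the (super) Leibniz rule, every term in which a derivative falls on $h^{-1}$ lowers the differential order by one.

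I expect no real obstacle here; the statement is essentially the conjugation-invariance of the principal symbol, which for a line bundle is intrinsic because $\End(L)$ is canonically trivial. The only point that needs a moment's care is the bookkeeping in the graded setting: because $h$ is \emph{even}, multiplication by $h$ is an even operator, it commutes with the parity grading, and the Leibniz expansion produces no unexpected signs, so that the order-lowering of $[P,m_h^{-1}]$ goes through exactly as in the classical case.
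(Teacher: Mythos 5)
Your proof is correct and rests on exactly the same mechanism as the paper's: the observation that, by the (super) Leibniz rule, any term in which a derivative lands on the even transition function $h$ (or $h^{-1}$) drops the differential order, so only the top-order part acting entirely on $f$ survives in the leading term. You package this as conjugation-invariance of the principal symbol via $m_h\circ P\circ m_h^{-1}=P+m_h\circ[P,m_h^{-1}]$, whereas the paper simply expands $P(fh)$ directly, but the content is identical.
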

\begin{proof}
If $\Tilde\lambda$ is also a nowhere vanishing section, then there is a uniquely determined nowhere vanishing function $h$ with $\Tilde\lambda=h\lambda$. For $\sigma=f\Tilde\lambda=fh\lambda$, we have
\[
P^{(\Tilde\lambda)}\sigma = P(f)\Tilde\lambda = P(f)h\lambda.
\]
On the other hand, we have
\[
P^{(\lambda)}\sigma=P(fh)\lambda=((P_\text{leading}f)\,h +\cdots)\lambda,
\]
where the dots denote terms where less than the maximum number of derivatives hit $f$.
\end{proof}

We now want to extend the BV Laplacian $\Delta_\mu$ to $s$\ndash densities (see Appendix~\ref{a:dens} for a review).
Since we already have an even, nowhere vanishing section, $\mu$, of the density bundle, we can use it to get our reference section, $\mu^s$, of the $s$\ndash density bundle.\footnote{Since we are interested also in nonintegral $s$, in particular $s=\frac12$, it is essential to require that $\calM$ is oriented.} That is, 
\begin{Def}
For an $s$\ndash density $\sigma$, which we can uniquely write as $\sigma=f\mu^s$, we set\footnote{ We avoid the more pedantic notation $\Delta_\mu^{(\mu^s)}$ on the left-hand side.
}
\[
\Delta_\mu^{(s)}\sigma\coloneqq \Delta_\mu(f)\mu^s.
\]
\end{Def}
Note that $\Delta_\mu^{(0)}$ is the same as $\Delta_\mu$.

\begin{Rem}
An immediate consequence of this definition is that $\Delta_\mu^{(s)}$ squares to zero on $s$\ndash densities if and only if $\Delta_\mu$ does so on functions (i.e., $\mu$ is compatible with the odd symplectic structure).
\end{Rem}

Proposition~\ref{p:DeltaD} and Lemma~\ref{l:leadL} immediately imply
\begin{Prop}\label{p:Deltashmu} 
The BV Laplacian on $s$\ndash densities is a second-order differential operator, and its leading term is independent of the choice of reference density.
\end{Prop}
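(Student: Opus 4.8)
The plan is to combine the two results cited immediately before the statement, exactly as the paper signals. Proposition~\ref{p:DeltaD} tells us that on \emph{functions} the BV Laplacian $\Delta_\mu$ is a second-order differential operator whose leading (second-order) term is independent of the choice of $\mu$. Lemma~\ref{l:leadL} tells us that for any differential operator $P$ on functions, the extension $P^{(\lambda)}$ to sections of a trivial line bundle via a reference section $\lambda$ is again a differential operator of the same order, and that its leading term is independent of $\lambda$. The $s$\ndash density bundle is (locally, and globally on the oriented $\calM$ we have fixed) a trivial line bundle, and $\mu^s$ is precisely an even, nowhere vanishing reference section of it. So $\Delta_\mu^{(s)}$ is literally an instance of the construction $P^{(\lambda)}$ with $P=\Delta_\mu$ and $\lambda=\mu^s$.

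First I would observe that $\Delta_\mu$ is a second-order differential operator on functions, by Proposition~\ref{p:DeltaD}. Next I would note that, by the very definition of $\Delta_\mu^{(s)}$, writing an $s$\ndash density as $\sigma=f\mu^s$ and setting $\Delta_\mu^{(s)}\sigma\coloneqq\Delta_\mu(f)\,\mu^s$, the operator $\Delta_\mu^{(s)}$ is exactly the extension $\Delta_\mu^{(\mu^s)}$ in the notation of Lemma~\ref{l:leadL}. Since extension preserves the differential-operator order, $\Delta_\mu^{(s)}$ is again second order. For the invariance of the leading term I would argue in two steps corresponding to the two dependencies that could in principle appear: the dependence of $\mu^s$ as a reference section, and the dependence of $\Delta_\mu$ itself on $\mu$. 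Lemma~\ref{l:leadL} handles the first: the leading term of $\Delta_\mu^{(\mu^s)}$ does not depend on which nowhere vanishing reference section we extend along. Proposition~\ref{p:DeltaD} handles the second: the leading term of $\Delta_\mu$ on functions does not depend on $\mu$.

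The only genuine subtlety — and the single point I would treat with care rather than as routine — is that in $\Delta_\mu^{(s)}$ the choice of $\mu$ enters in \emph{two} ways simultaneously: once inside the operator $\Delta_\mu$ on functions, and once in the reference section $\mu^s$. One must check that changing the reference density $\Tilde\mu=h\mu$ changes \emph{both} at once and that the leading term is still invariant. I would make this clean by decoupling the two roles: compare $\Delta_{\Tilde\mu}^{(\Tilde\mu^{\,s})}$ to $\Delta_\mu^{(\Tilde\mu^{\,s})}$ and then to $\Delta_\mu^{(\mu^s)}$. For the first comparison, $\Delta_{\Tilde\mu}$ and $\Delta_\mu$ differ by the first-order operator $\tfrac12\,h^{-1}X_h$ (Proposition~\ref{p:Deltahmu}), so they share the same second-order leading term, and extending along a fixed section preserves leading terms; for the second comparison, Lemma~\ref{l:leadL} gives independence of the reference section. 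Composing the two comparisons yields that $\Delta_{\Tilde\mu}^{(s)}$ and $\Delta_\mu^{(s)}$ have the same leading term, which is the assertion.

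I expect no real obstacle here: the statement is an immediate corollary, and both ingredients are already in hand. The main thing to get right is the bookkeeping of the two occurrences of $\mu$, which is why the cleanest presentation is the two-step comparison above rather than a single change-of-density computation.
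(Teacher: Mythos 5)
Your proof is correct and follows exactly the route the paper intends: the paper's entire ``proof'' is the sentence that Proposition~\ref{p:DeltaD} and Lemma~\ref{l:leadL} immediately imply the statement, and your two-step comparison (first changing the operator $\Delta_\mu\rightsquigarrow\Delta_{\Tilde\mu}$ at fixed reference section, then changing the reference section at fixed operator) is just a careful spelling-out of that implication. Your explicit handling of the double occurrence of $\mu$ is a worthwhile clarification but not a different argument.
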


Proposition~\ref{p:psistarDeltaf} immediately implies
\begin{Prop}\label{p:psistarDeltas} 
For every symplectomorphism $\Psi$ and for every $s$,
\[
\Psi_*\Delta^{(s)}_\mu=\Delta^{(s)}_{\Psi_*\mu}.
\]
\end{Prop}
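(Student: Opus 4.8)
The plan is to reduce the statement entirely to its function\ndash level counterpart, Proposition~\ref{p:psistarDeltaf}, by unwinding the definition of $\Delta^{(s)}_\mu$ and exploiting the multiplicativity of the pushforward on $s$\ndash densities. The only structural input I need beyond Proposition~\ref{p:psistarDeltaf} is that, for a diffeomorphism $\Psi\colon\calM\to\calN$, the induced pushforward on $s$\ndash densities is compatible with the reference\ndash section decomposition, i.e.\ $\Psi_*(f\mu^s)=(\Psi_*f)(\Psi_*\mu)^s$ for a function $f$ and an even, nowhere vanishing density $\mu$. This is immediate from the way $s$\ndash densities transform (the density factor transforms by the $s$\ndash th power of the Jacobian, while the function is pulled back), and it is exactly the identity that lets the function\ndash level result propagate to densities.

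First I would fix an arbitrary $s$\ndash density $\sigma$ on $\calN$ and compute $\Psi_*\Delta^{(s)}_\mu\,\sigma = \Psi_*\bigl(\Delta^{(s)}_\mu(\Psi_*^{-1}\sigma)\bigr)$, following the definition $\Psi_*P=\Psi_*\circ P\circ\Psi_*^{-1}$. Writing $\Psi_*^{-1}\sigma=g\,\mu^s$ for a uniquely determined function $g$ on $\calM$, the definition of $\Delta^{(s)}_\mu$ gives $\Delta^{(s)}_\mu(g\mu^s)=\Delta_\mu(g)\,\mu^s$. Applying $\Psi_*$ and using the multiplicativity noted above yields $\Psi_*\bigl(\Delta_\mu(g)\,\mu^s\bigr)=\bigl(\Psi_*\Delta_\mu(g)\bigr)(\Psi_*\mu)^s$, where $\Psi_*\Delta_\mu(g)=\Psi_*\bigl(\Delta_\mu(\Psi_*^{-1}\Psi_*g)\bigr)=(\Psi_*\Delta_\mu)(\Psi_*g)$.

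At this point Proposition~\ref{p:psistarDeltaf} does the work: $(\Psi_*\Delta_\mu)(\Psi_*g)=\Delta_{\Psi_*\mu}(\Psi_*g)$. I would then compare with the right\ndash hand side. Since $\sigma=\Psi_*(g\mu^s)=(\Psi_*g)(\Psi_*\mu)^s$, the reference\ndash section decomposition of $\sigma$ with respect to $\Psi_*\mu$ has function part $\Psi_*g$, so by definition $\Delta^{(s)}_{\Psi_*\mu}\sigma=\Delta_{\Psi_*\mu}(\Psi_*g)\,(\Psi_*\mu)^s$. The two expressions coincide, which is the claim.

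I do not expect any genuine obstacle here: the content is essentially bookkeeping, and the one point deserving care is the compatibility identity $\Psi_*(f\mu^s)=(\Psi_*f)(\Psi_*\mu)^s$, which fixes the correct meaning of the pushforward on $s$\ndash densities and is what makes the whole computation collapse onto Proposition~\ref{p:psistarDeltaf}. Everything else is a matter of tracking the function part of an $s$\ndash density through $\Psi_*^{-1}$, the operator, and $\Psi_*$ in turn.
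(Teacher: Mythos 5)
Your argument is correct and is exactly the reduction the paper has in mind when it says Proposition~\ref{p:psistarDeltaf} ``immediately implies'' this statement: you unwind the definition of $\Delta^{(s)}_\mu$, use $\Psi_*(f\mu^s)=(\Psi_*f)(\Psi_*\mu)^s$, and invoke the function\ndash level result. The paper leaves these bookkeeping steps implicit; you have simply written them out.
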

\begin{Rem}
In particular, this implies that, for every $s$, under a symplectomorphism $\Psi\colon\calM\to\calM$, the leading term of the BV Laplacian on $s$\ndash densities is invariant under symplectomorphisms. 
\end{Rem}

Note that the product of an $s$\ndash density $\sigma$ and a $(1-s)$\ndash density $\tau$ yields a density, which can be integrated (if it has compact support). Given a differential operator $P$ on $s$\ndash densities, we define its transpose $P^t$ as the differential operator on $(1-s)$\ndash densities that satisfies
\[
\int_{\calM} P\sigma\,\tau=(-1)^{P\,\sigma}\int_{\calM} \sigma\,P^t\tau
\] 
for all $s$\ndash densities $\sigma$ with compact support (the upper index ``$P\,\sigma$'' denotes the product of the degrees; see footnote~\fullref{f:signs}). The case $s=\frac12$ is special, since transposition becomes an endomorphism on the space of differential operators. 
\begin{Prop}\label{p:Deltatranspose}
We have
\[
(\Delta_\mu^{(s)})^t=\Delta_\mu^{(1-s)}
\]
In particular, $\Delta_\mu^{(\frac12)}$ is symmetric.
\end{Prop}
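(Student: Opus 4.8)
The plan is to reduce the statement about $s$\ndash densities to a single integration-by-parts identity for functions. Writing an arbitrary $s$\ndash density as $\sigma=f\mu^s$ (with $f$ of compact support) and an arbitrary $(1-s)$\ndash density as $\tau=g\mu^{1-s}$, and noting that $\sigma\tau=fg\,\mu$ while $(\Delta_\mu^{(s)}\sigma)\tau=(\Delta_\mu f)\,g\,\mu$ and $\sigma(\Delta_\mu^{(1-s)}\tau)=f\,(\Delta_\mu g)\,\mu$, the defining property of the transpose becomes
\[
\int_\calM (\Delta_\mu f)\,g\,\mu = (-1)^{f}\int_\calM f\,(\Delta_\mu g)\,\mu .
\]
Here I used that $\Delta_\mu^{(s)}$ is odd and $\mu^s$ is even, so the sign $(-1)^{P\sigma}$ reduces to $(-1)^{f}$. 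Thus everything comes down to showing that $\Delta_\mu$ on functions is skew-self-adjoint, in the graded sense, for the pairing $\int_\calM fg\,\mu$; the case $s=\frac12$ then gives $(\Delta_\mu^{(\frac12)})^t=\Delta_\mu^{(\frac12)}$, i.e.\ symmetry.

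First I would record the auxiliary fact that $\int_\calM \Delta_\mu(k)\,\mu=0$ for every compactly supported function $k$. This is immediate from $\Delta_\mu(k)\,\mu=\frac12\Div_\mu(X_k)\,\mu=\frac12\LL_{X_k}\mu$ together with the vanishing of the integral of a Lie derivative of a compactly supported density (Stokes; see Appendix~\ref{a:dens}). In particular $\int_\calM\Delta_\mu(fg)\,\mu=0$.

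Next I would compute $\int_\calM(f,g)\,\mu$. Starting from the Leibniz rule for the Lie derivative, $\LL_{X_f}(g\mu)=(X_fg)\,\mu+(-1)^{(f+1)g}g\,\LL_{X_f}\mu$, and using $\LL_{X_f}\mu=\Div_\mu(X_f)\,\mu=2(\Delta_\mu f)\,\mu$, the same Stokes vanishing gives $\int_\calM(X_fg)\,\mu=-2(-1)^{(f+1)g}\int_\calM g\,(\Delta_\mu f)\,\mu$. Since $(f,g)=(-1)^{f+1}X_f(g)$ and functions supercommute, reordering $g\,(\Delta_\mu f)$ into $(\Delta_\mu f)\,g$ and collecting signs yields
\[
\int_\calM(f,g)\,\mu=2(-1)^{f}\int_\calM(\Delta_\mu f)\,g\,\mu .
\]

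Finally I would integrate the Leibniz rule \eqref{e:Deltafg} against $\mu$. Its left-hand side vanishes by the first step, leaving
\[
0=\int_\calM(\Delta_\mu f)\,g\,\mu+(-1)^{f}\int_\calM f\,(\Delta_\mu g)\,\mu-(-1)^{f}\int_\calM(f,g)\,\mu .
\]
Substituting the value of $\int_\calM(f,g)\,\mu$ just obtained makes the bracket term contribute $-2\int_\calM(\Delta_\mu f)\,g\,\mu$, which cancels against the first term up to the desired factor and produces exactly the skew-self-adjointness identity above. The hard part will be none of the conceptual structure—the only genuine input is the proportionality of $\int_\calM(f,g)\,\mu$ to $\int_\calM(\Delta_\mu f)\,g\,\mu$—but rather the graded-sign bookkeeping: getting the signs right in the Leibniz expansion of $\LL_{X_f}(g\mu)$ and in reordering $g\,(\Delta_\mu f)$ to $(\Delta_\mu f)\,g$, since a single sign slip would break the final cancellation.
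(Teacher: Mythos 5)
Your proposal is correct; I checked the two sign-sensitive steps and they close properly: the Leibniz expansion of $\LL_{X_f}(g\mu)$ does give $\int_\calM(f,g)\,\mu=2(-1)^f\int_\calM(\Delta_\mu f)\,g\,\mu$, and substituting this into the integrated form of \eqref{e:Deltafg} yields $\int_\calM(\Delta_\mu f)\,g\,\mu=(-1)^f\int_\calM f\,\Delta_\mu g\,\mu$, which is exactly the transposition statement after the reduction you describe. The route is organized genuinely differently from the paper's, even though the engine is the same (a Leibniz rule plus the vanishing of integrated divergences). The paper stays pointwise as long as possible: using part~(\ref{p:div:f}) of Proposition~\ref{p:div} it rewrites $2\Delta_\mu f\,g$ as $(-1)^{(f+1)g}\Div_\mu(gX_f)+(-1)^f(f,g)$ and $2f\,\Delta_\mu g$ as $\Div_\mu(fX_g)+(f,g)$, so that the graded difference is a sum of exact divergences, and only then integrates; it uses nothing beyond Proposition~\ref{p:div}. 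You instead take as input the global product rule \eqref{e:Deltafg} (Proposition~\ref{p:Deltaprop}), which the paper proves in a digression and pointedly does not use in its main line, and you supplement it with an independent evaluation of $\int_\calM(f,g)\,\mu$; this is not circular, since Proposition~\ref{p:Deltaprop} is established before and independently of the present statement. What your version makes visible is that, granted \eqref{e:Deltafg}, the entire content of symmetry is the single identity $\int_\calM(f,g)\,\mu=2(-1)^f\int_\calM(\Delta_\mu f)\,g\,\mu$; the paper records essentially this relation (in the equivalent form $\int_\calM(f,g)\,\mu=2\int_\calM f\,\Delta_\mu g\,\mu$) only in a footnote, as a corollary of its proof rather than as an ingredient. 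A small economy you might note: your bracket identity combined with the graded antisymmetry $(g,f)=-(-1)^{(f+1)(g+1)}(f,g)$ already implies the conclusion, so the appeal to \eqref{e:Deltafg} could be dropped altogether.
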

\begin{proof}
Write $\sigma=f\mu^s$ and $\tau=g\mu^{1-s}$. Then
\[
\Delta_\mu^{(s)}\sigma\,\tau = \Delta_\mu f\, g\,\mu.
\]
But
\begin{gather*}
2\Delta_\mu f\, g=\Div_\mu X_f\, g = (-1)^{(f+1)g} g\, \Div_\mu X_f\\
= (-1)^{(f+1)g} (\Div_\mu(gX_f) - (-1)^{(f+1)g}X_f(g))\\
= (-1)^{(f+1)g} \Div_\mu(gX_f) + (-1)^f (f,g),
\end{gather*}
where we have used part (\ref{p:div:f}) of Proposition~\ref{p:div}. Similarly,
\[
\sigma\, \Delta_\mu^{(1-s)}\tau = f\,\Delta_\mu g\,\mu,
\]
and
\begin{gather*}
2f\,\Delta_\mu g = f\,\Div_\mu X_g = \Div_\mu(f X_g)-(-1)^{f(g+1)}X_g(f)\\
=  \Div_\mu(f X_g) + (-1)^{fg+f+g} (g,f)
=  \Div_\mu(f X_g) + (f,g).
\end{gather*}
Therefore, 
\[
2\Delta_\mu f\, g-(-1)^f2f\,\Delta_\mu g= (-1)^{(f+1)g} \Div_\mu(gX_f)-(-1)^f \Div_\mu(f X_g),
\]
which implies, thanks to part (\ref{p:div:int}) of Proposition~\ref{p:div}, that\footnote{Note the different sign in \eqref{e:Deltafg}, which instead yields
\[
\int_\calM(\Delta_\mu^{(s)}\sigma\,\tau+(-1)^\sigma \sigma\, \Delta_\mu^{(1-s)}\tau)
=\int_\calM(\Delta_\mu f\, g+(-1)^f f\,\Delta_\mu g)\,\mu=(-1)^f\int_\calM (f,g)\,\mu.
\]
Subtracting the two identities yields the interesting formula
\[
\int_\calM (f,g)\,\mu = 2 \int_\calM f\,\Delta_\mu g\,\mu,
\]
which in particular shows that the integral of the BV bracket of two functions (at least one of which has compact support) against the density $\mu$ vanishes if one of the two functions is in the kernel of $\Delta_\mu$.}
\[
\int_\calM(\Delta_\mu^{(s)}\sigma\,\tau-(-1)^\sigma \sigma\, \Delta_\mu^{(1-s)}\tau)=0.
\]
\end{proof}
\begin{Rem}
By inspection in the proof one sees that it is essential that we have used the same density $\mu$ to define the BV Laplacian on functions and to extend it to $s$\ndash densities.
\end{Rem}

We now come to the fundamental consequence of this proposition.
\begin{Lem}
In each Darboux chart, there is an odd function $G_\mu$, depending on $\mu$, such that
\[
\Delta_\mu^{(\frac12)}\sigma = (\triangle f + G_\mu f)\,\mu_\text{stand}^{\frac12},
\]
where we have written $\sigma=f\,\mu_\text{stand}^{\frac12}$.
\end{Lem}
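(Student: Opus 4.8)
The plan is to exploit the two structural facts already in hand: that $\Delta_\mu^{(\frac12)}$ is a second-order operator whose leading term is forced, and that it is symmetric. Working in a fixed Darboux chart and taking $\mu_\text{stand}^{\frac12}$ as the reference half-density, the assignment $\Delta_\mu^{(\frac12)}\sigma=(Pf)\,\mu_\text{stand}^{\frac12}$ for $\sigma=f\,\mu_\text{stand}^{\frac12}$ defines a second-order differential operator $P$ on functions. I would decompose $P=P_2+P_1+P_0$ into its homogeneous parts of orders $2$, $1$, $0$, and aim to show $P_2=\triangle$ and $P_1=0$, so that $P=\triangle+G_\mu$ with $G_\mu\coloneqq P_0$ a multiplication operator.

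First I would pin down the leading term. By Proposition~\ref{p:Deltashmu} the leading term of $\Delta_\mu^{(\frac12)}$ is independent of the reference density, so it coincides with that of $\Delta_{\mu_\text{stand}}^{(\frac12)}$; and by Proposition~\ref{p:DeltafD} we have $\Delta_{\mu_\text{stand}}^{(\frac12)}\sigma=\triangle(f)\,\mu_\text{stand}^{\frac12}$, whose operator is purely second order and equal to $\triangle$. Hence $P_2=\triangle$.

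The heart of the argument is the vanishing of $P_1$, and here I would invoke symmetry. Since $\sigma\tau=fg\,\mu_\text{stand}$, the relevant pairing is $\int_{\calM}fg\,\mu_\text{stand}$, and by Proposition~\ref{p:Deltatranspose} the operator $P$ equals its own transpose with respect to it. Transposition preserves the order of an operator and multiplies its principal symbol by $(-1)^{\text{order}}$ (this being exactly the outcome of integrating a total derivative by parts, which simultaneously produces the lower-order corrections). In particular $\triangle=P_2$ is self-transpose, and any zeroth-order multiplication operator is self-transpose, so the difference $Q\coloneqq P-\triangle=P_1+P_0$ is again self-transpose. But $Q$ has order at most one, and the sign rule says the order-one part of $Q^t$ is the negative of that of $Q$; self-transposition therefore forces the order-one symbol to vanish, i.e.\ $P_1=0$. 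Thus $P=\triangle+G_\mu$ with $G_\mu$ a function, and since $\Delta_\mu^{(\frac12)}$ and $\triangle$ are both odd, $G_\mu$ is odd.

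The main obstacle is the sign bookkeeping in this last step: one must carry the parity factor $(-1)^{P\sigma}$ from the definition of the transpose together with the Berezin integration-by-parts signs when transposing a first-order operator, so as to be certain that its principal symbol genuinely flips sign (rather than being left unchanged, which would wreck the argument). Once that sign rule is confirmed---precisely the careful computation already performed for Proposition~\ref{p:Deltatranspose}---the conclusion that a self-transpose operator of order at most one carries no genuine first-order term is immediate.
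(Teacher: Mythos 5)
Your proposal is correct and follows essentially the same route as the paper: fix the leading term as $\triangle$ via the $\mu$\ndash independence of the principal symbol and the Darboux computation, then use symmetry of $\Delta_\mu^{(\frac12)}$ together with the facts that $\triangle$ (constant coefficients) and multiplication operators are self\ndash transpose while a vector field transposes to minus itself plus a function, to kill the first\ndash order part. The paper phrases this as $Y_\mu^t=-Y_\mu+{}$some function, which is exactly your sign rule for the order\ndash one symbol.
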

\begin{proof}
Since $\Delta_\mu^{(\frac12)}$ is a second-order differential operator and its leading term is independent of $\mu$, we have
\[
\Delta_\mu^{(\frac12)}\sigma = (\triangle f + Y_\mu(f) + G_\mu\, f)\,\mu_\text{stand}^{\frac12},
\]
for some vector field $Y_\mu$ and some function $G_\mu$. Since $\triangle$ has constant coefficients, it is symmetric. Moroever, since $Y_\mu$ is a vector field, we have $Y_\mu^t=-Y_\mu+$some function. Since $\Delta_\mu^{(\frac12)}$ is symmetric, we get $Y_\mu=0$.
\end{proof}
This implies the following
\begin{Cor}\label{c:Deltacan}
For every compatible $\mu$, the BV Laplacian on half-densities has a canonical form in every Darboux chart:
\[
\Delta_\mu^{(\frac12)}\sigma = \triangle f \,\mu_\text{stand}^{\frac12},
\]
where we have written $\sigma=f\,\mu_\text{stand}^{\frac12}$. In particular, $\Delta_\mu^{(\frac12)}$ is independent of the choice of the compatible density $\mu$.
\end{Cor}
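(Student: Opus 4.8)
The plan is to show that the odd function $G_\mu$ furnished by the preceding Lemma vanishes for every compatible $\mu$; the displayed canonical form and the independence statement then follow at once. First I would invoke the Remark following the definition of $\Delta_\mu^{(s)}$: compatibility of $\mu$, i.e.\ $\Delta_\mu^2=0$ on functions, is equivalent to $(\Delta_\mu^{(\frac12)})^2=0$ on half\ndash densities. Fixing a Darboux chart and writing $\sigma=f\,\mu_\text{stand}^{\frac12}$, the preceding Lemma identifies $\Delta_\mu^{(\frac12)}$ with the operator $\triangle+G_\mu$ acting on the coefficient $f$, where $G_\mu$ acts by multiplication.

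The core of the argument is to expand the composition $(\triangle+G_\mu)^2f$ into the four terms $\triangle^2f$, $\triangle(G_\mu f)$, $G_\mu\triangle f$, and $G_\mu^2 f$. The first vanishes by \eqref{e:trianglesquare} and the last vanishes because $G_\mu$ is odd. For the remaining cross terms I would apply the graded Leibniz rule \eqref{e:trianglefg} with $(-1)^{G_\mu}=-1$, obtaining $\triangle(G_\mu f)=(\triangle G_\mu)f-G_\mu\triangle f+(G_\mu,f)$; the two $G_\mu\triangle f$ contributions then cancel and we are left with
\[
(\Delta_\mu^{(\frac12)})^2 f=(\triangle G_\mu)\,f+(G_\mu,f).
\]
Imposing that this vanish for all $f$ and inserting $f=1$ gives $\triangle G_\mu=0$ (the bracket against a constant being zero), so the identity collapses to $(G_\mu,f)=0$ for every $f$. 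In the Darboux chart this forces every partial derivative of $G_\mu$ to vanish, i.e.\ $G_\mu$ is constant; being odd, it must be zero.

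With $G_\mu=0$ the preceding Lemma reduces to $\Delta_\mu^{(\frac12)}\sigma=\triangle f\,\mu_\text{stand}^{\frac12}$ in every Darboux chart, which is the claimed canonical form. For the independence statement I would note that this local expression contains no reference to $\mu$ whatsoever: any two compatible densities therefore define operators that agree on each chart, and since Darboux charts cover $\calM$ the two global operators coincide. The step I expect to require the most care is the sign bookkeeping in the Leibniz expansion of $\triangle(G_\mu f)$---securing the cancellation of the $G_\mu\triangle f$ terms and the correct sign on $(G_\mu,f)$---together with the small but essential observation that $(G_\mu,f)=0$ for all $f$ genuinely pins $G_\mu$ down to a constant, which is exactly where nondegeneracy of $\omega$ is used.
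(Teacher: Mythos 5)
Your proof is correct and is essentially identical to the paper's: both reduce to computing $(\triangle+G_\mu)^2f$, using \eqref{e:trianglesquare}, $G_\mu^2=0$, and the Leibniz rule \eqref{e:trianglefg} to arrive at $(\triangle G_\mu)f+(G_\mu,f)=0$, then setting $f=1$ and concluding that $G_\mu$ is an odd constant, hence zero. The only difference is that you spell out the sign bookkeeping in the expansion, which the paper leaves implicit.
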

\begin{proof}
Using $\triangle^2=0$, $G_\mu^2=0$, and \eqref{e:trianglefg}, we get
\[
0=(\Delta_\mu^{(\frac12)})^2\sigma = (\triangle G_\mu\,f + (G_\mu,f))\mu_\text{stand}^{\frac12}.
\]
Therefore, for every function $f$, we have
\[
\triangle G_\mu\,f + (G_\mu,f)=0.
\]
In particular, for $f=1$, we get $\triangle G_\mu=0$, so the condition simplifies to $(G_\mu,f)=0$ for every $f$. This implies that $G_\mu$ is constant, but then it must vanish because it is odd.\footnote{This argument fails if one wants to work in families, as in this case odd constants are allowed.}
\end{proof}

In every Darboux chart, we can consider the BV Laplacian $\Delta_{\mu_\text{stand}}^{(\frac12)}$, which sends $
f\,\mu_\text{stand}^{\frac12}$ to $\triangle f \,\mu_\text{stand}^{\frac12}$. If we move from this Darboux chart to another one via the transition map $\phi$, the operator goes to $\phi_*\Delta_{\mu_\text{stand}}^{(\frac12)}$, which, by Proposition~\ref{p:psistarDeltas}, is $\Delta_{\phi_*\mu_\text{stand}}^{(\frac12)}$. However, 
since $\mu_\text{stand}$ is a compatible density (and therefore, by Remark~\ref{r:symplsendcomp}, also $\phi_* \mu_\text{stand}$ is a compatible density), by the corollary
this is again
$\Delta_{\mu_\text{stand}}^{(\frac12)}$ (in the new chart). Therefore, we have the
\begin{ThmDef}\label{td:Deltacan}
There is a \textsf{canonical BV operator} $\Delta$ acting on half-densities, defined as $\Delta_{\mu_\text{stand}}^{(\frac12)}$ in every Darboux chart.
\end{ThmDef}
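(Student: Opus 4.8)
The plan is to read the statement as a \emph{gluing} (well-definedness) assertion. We are prescribing, on each Darboux chart $U$, the operator $\Delta_{\mu_\text{stand}}^{(\frac12)}$ attached to that chart's standard density $\mu_\text{stand}=\dd^nq\,\dd^np$, and we must check that these chart-by-chart prescriptions patch together into a single operator on half-densities over $\calM$. Since $\calM$ is covered by Darboux charts (by the odd Darboux theorem) and a differential operator on half-densities is globally defined precisely when its local expressions agree under changes of coordinates, the entire content reduces to showing that on any overlap $U\cap U'$ of two Darboux charts the two prescriptions coincide.

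First I would observe that the transition map $\phi$ between two Darboux charts is a symplectomorphism, since both charts bring $\omega$ to the same normal form $\dd p_i\dd q^i$. Expressing the operator built on $U$ in the coordinates of $U'$ amounts to applying the pushforward $\phi_*$, and by Proposition~\ref{p:psistarDeltas} we have $\phi_*\Delta_{\mu_\text{stand}}^{(\frac12)}=\Delta_{\phi_*\mu_\text{stand}}^{(\frac12)}$. Next I would note that $\mu_\text{stand}$ is a compatible density: by Proposition~\ref{p:DeltafD} it yields $\Delta_{\mu_\text{stand}}=\triangle$, and $\triangle^2=0$ by \eqref{e:trianglesquare}. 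By Remark~\ref{r:symplsendcomp}, compatibility is preserved by symplectomorphisms, so $\phi_*\mu_\text{stand}$ is again compatible on $U'$. Finally, Corollary~\ref{c:Deltacan} asserts that on a fixed Darboux chart $\Delta_\mu^{(\frac12)}$ is independent of the chosen compatible density; applied in the chart $U'$ this gives $\Delta_{\phi_*\mu_\text{stand}}^{(\frac12)}=\Delta_{\mu'_\text{stand}}^{(\frac12)}$. Chaining these equalities yields $\phi_*\Delta_{\mu_\text{stand}}^{(\frac12)}=\Delta_{\mu'_\text{stand}}^{(\frac12)}$, which is exactly the desired agreement on the overlap; hence the local prescriptions glue to a global operator $\Delta$.

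The main obstacle is not in the final gluing, which is a short chain of the results already assembled, but in making sure the three inputs genuinely apply: that the transition maps really are symplectomorphisms, that $\mu_\text{stand}$ really is compatible (so that Corollary~\ref{c:Deltacan} is available), and above all that the chart-independence of Corollary~\ref{c:Deltacan} holds. That corollary is where the real work sits, resting in turn on the symmetry of $\Delta_\mu^{(\frac12)}$ established in Proposition~\ref{p:Deltatranspose}; once it is in hand the present statement is essentially a formality.
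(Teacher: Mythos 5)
Your proposal is correct and follows essentially the same route as the paper: the transition map between Darboux charts is a symplectomorphism, so Proposition~\ref{p:psistarDeltas} gives $\phi_*\Delta_{\mu_\text{stand}}^{(\frac12)}=\Delta_{\phi_*\mu_\text{stand}}^{(\frac12)}$, Remark~\ref{r:symplsendcomp} shows $\phi_*\mu_\text{stand}$ is still compatible, and Corollary~\ref{c:Deltacan} identifies the result with the standard operator in the new chart. Your added justification that $\mu_\text{stand}$ is compatible (via Proposition~\ref{p:DeltafD} and \eqref{e:trianglesquare}) is a detail the paper leaves implicit, but the argument is the same.
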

Note that, by its very definition, the canonical BV operator is invariant under symplectomorphisms and satisfies $\Delta^2=0$.

Corollary~\ref{c:Deltacan} may now be reformulated as 
\begin{Thm}\label{t:compdeltadelta}
For every compatible $\mu$, the BV Laplacian on half-densities is equal to the canonical BV operator:
\[
\Delta_\mu^{(\frac12)}=\Delta.
\]
\end{Thm}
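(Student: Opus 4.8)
The plan is to recognize that this theorem is essentially a repackaging of Corollary~\ref{c:Deltacan} together with the definition of the canonical operator given in Theorem/Definition~\ref{td:Deltacan}; once those two are invoked, almost no further work remains.

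First I would recall how $\Delta$ is defined. By Theorem/Definition~\ref{td:Deltacan}, in every Darboux chart the canonical BV operator acts on a half-density $\sigma = f\,\mu_\text{stand}^{\frac12}$ by
\[
\Delta\sigma = \triangle f\,\mu_\text{stand}^{\frac12},
\]
that is, it coincides, chart by chart, with $\Delta_{\mu_\text{stand}}^{(\frac12)}$.

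Next I would invoke Corollary~\ref{c:Deltacan}, which asserts that for any compatible density $\mu$ the operator $\Delta_\mu^{(\frac12)}$ takes exactly this same form in every Darboux chart: writing $\sigma = f\,\mu_\text{stand}^{\frac12}$, one has $\Delta_\mu^{(\frac12)}\sigma = \triangle f\,\mu_\text{stand}^{\frac12}$. Comparing the two expressions shows that $\Delta$ and $\Delta_\mu^{(\frac12)}$ agree on every Darboux chart. Since an odd symplectic manifold is covered by Darboux charts and both sides are globally defined differential operators on half-densities, agreement on this open cover forces equality on all of $\calM$, which is the assertion.

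The only point deserving attention---and it is not really an obstacle---is to check that one is genuinely comparing global objects: $\Delta_\mu^{(\frac12)}$ is defined globally from the outset through the global density $\mu$, whereas $\Delta$ was assembled chart by chart and shown to be chart-independent in the discussion preceding Theorem/Definition~\ref{td:Deltacan}. All the substantive content---the symmetry of $\Delta_\mu^{(\frac12)}$ and the vanishing of the zeroth-order term $G_\mu$ forced by the compatibility condition $\Delta_\mu^2 = 0$---has already been discharged in proving Corollary~\ref{c:Deltacan}, so here nothing is left beyond matching local formulas.
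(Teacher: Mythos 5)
Your proposal is correct and follows exactly the paper's own route: the paper presents Theorem~\ref{t:compdeltadelta} as a direct reformulation of Corollary~\ref{c:Deltacan} in light of the chart-wise definition of $\Delta$ in Theorem/Definition~\ref{td:Deltacan}, which is precisely the comparison you carry out. Your added remark about checking that both sides are genuinely global operators is a reasonable point of care, but the paper has already discharged it in the discussion preceding Theorem/Definition~\ref{td:Deltacan}.
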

By this theorem we have that the canonical BV operator is a BV Laplacian on half\ndash densities whenever the odd symplectic manifold admits compatible densities,\footnote{Unlike in the even symplectic case, where a canonical choice of density is provided by the top exterior power of the symplectic form, there is no canonical density in the odd case.} which we have not proved yet.

On the other hand, what is remarkable is that $\Delta$ is canonically defined (by Theorem/Definition~\ref{td:Deltacan}) without choosing any density (in particular it would exist even if there were no compatible densities). 

The canonical BV operator gives another way of defining the BV Laplacian on functions. Namely, given an even, nowhere vanishing density $\mu$, we define a differential operator $\Hat\Delta_\mu$ on functions via
\begin{equation}\label{e:hatDelta}
\Hat\Delta_\mu f= \Delta(f\mu^{\frac12})\,\mu^{-\frac12}.
\end{equation}
Note that $\Hat\Delta_\mu^2=0$ for every $\mu$. However, it is not a BV Laplacian in general. We actually have the 
\begin{Lem}\label{l:hattilde}
For every $\mu$, 
\[
\Hat\Delta_\mu = \Tilde\Delta_\mu
\]
in every Darboux chart, where $\Tilde\Delta_\mu$ is the operator defined in Remark~\ref{r:tildeDelta}.
\end{Lem}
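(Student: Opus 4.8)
The plan is to fix a Darboux chart and reduce both operators to expressions involving only the standard BV Laplacian $\triangle$ acting on \emph{functions}, so that they can be compared coefficient by coefficient. First I would write $\mu=\pm\EE^S\mu_\text{stand}$ as in Lemma~\ref{l:DeltaD}, so that $\mu^{\frac12}=c\,\EE^{\frac12 S}\mu_\text{stand}^{\frac12}$ for a nonzero scalar $c$. Since $\Delta$ is linear and, by Theorem/Definition~\ref{td:Deltacan}, acts on the coordinate half-density as $\Delta(g\,\mu_\text{stand}^{\frac12})=\triangle g\,\mu_\text{stand}^{\frac12}$, the constant $c$ cancels between $\mu^{\frac12}$ and $\mu^{-\frac12}$ (and $\mu_\text{stand}^{\frac12}\mu_\text{stand}^{-\frac12}=1$), so the definition \eqref{e:hatDelta} collapses to the purely functional expression
\[
\Hat\Delta_\mu f=\triangle\!\left(f\,\EE^{\frac12 S}\right)\EE^{-\frac12 S}.
\]

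Next I would expand the right-hand side with the Leibniz rule \eqref{e:trianglefg}, taking $g=\EE^{\frac12 S}$ (even, of degree $0$). This produces three terms: $(\triangle f)\EE^{\frac12 S}$, a term $(-1)^f f\,\triangle\EE^{\frac12 S}$, and a bracket term $-(-1)^f(f,\EE^{\frac12 S})$. For the middle term I would use \eqref{e:triangleE} with $S$ replaced by $\frac12 S$ to obtain $\triangle\EE^{\frac12 S}=\left(\frac12\triangle S-\frac18(S,S)\right)\EE^{\frac12 S}$, and for the last term the derivation property of the bracket gives $(f,\EE^{\frac12 S})=\frac12(f,S)\,\EE^{\frac12 S}$. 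After multiplying through by $\EE^{-\frac12 S}$ the exponentials cancel and I am left with
\[
\Hat\Delta_\mu f=\triangle f+(-1)^f\!\left(\tfrac12\triangle S-\tfrac18(S,S)\right)f-\tfrac12(-1)^f(f,S).
\]

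The final step is to recognize the two correction terms as those appearing in $\Tilde\Delta_\mu$. The coefficient $\frac12\triangle S-\frac18(S,S)$ is exactly $\frac12 F_S$ with $F_S$ as in \eqref{e:FS}, and since $F_S$ is odd the sign factor turns $(-1)^f f\,\frac12 F_S$ into left multiplication by $\frac12 F_S$. Similarly, unwinding $(f,g)=(-1)^{f+1}X_f(g)$ together with the (anti)symmetry of the bracket and the evenness of $S$ yields $-\frac12(-1)^f(f,S)=\frac12 X_S(f)$. Hence $\Hat\Delta_\mu=\triangle+\frac12 X_S+\frac12 F_S$, which is precisely $\Tilde\Delta_\mu$ from Remark~\ref{r:tildeDelta}. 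None of the structural ideas are the obstacle here; the only delicate point is the sign bookkeeping---correctly commuting the even factor $\EE^{\pm\frac12 S}$ and the odd $F_S$ past a homogeneous $f$, and converting the bracket term into $X_S$ with the right sign---together with checking that the scalar $c$ genuinely cancels, so that orientability and the $\pm$ ambiguity in $\mu^{\frac12}$ cause no trouble.
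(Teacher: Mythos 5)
Your proposal is correct and follows essentially the same route as the paper: write $\mu=\pm\EE^S\mu_\text{stand}$, use the Darboux-chart formula $\Delta(g\,\mu_\text{stand}^{\frac12})=\triangle g\,\mu_\text{stand}^{\frac12}$ to reduce $\Hat\Delta_\mu f$ to $\triangle(f\EE^{\frac12 S})\EE^{-\frac12 S}$, and then expand via \eqref{e:trianglefg} and \eqref{e:triangleE} to recover $\triangle+\frac12 X_S+\frac12 F_S=\Tilde\Delta_\mu$. Your sign bookkeeping (in particular $-( -1)^f\tfrac12(f,S)=\tfrac12 X_S(f)$ and the cancellation of the scalar $c$) checks out.
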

\begin{proof}
Writing $\mu=\EE^S\mu_\text{stand}$, we have
\[
\Hat\Delta_\mu f\,\EE^{\frac S2}\mu_\text{stand}^{\frac12}= \Delta(f\EE^{\frac S2}\mu_\text{stand}^{\frac12})
=\triangle(f\EE^{\frac S2})\,\mu_\text{stand}^{\frac12}.
\]
One can easily compute
\[
\triangle(f\EE^{\frac S2})=
\left(\triangle f - \frac12(S,f)+\frac12 F_S\right)
\EE^{\frac S2}
=\Tilde\Delta_\mu f\,\EE^{\frac S2},
\]
 with $F_S$ as in \eqref{e:FS}.
\end{proof}
This yields the following
\begin{Cor}\label{c:mucomp}
Let $\mu$ be an even, nowhere vanishing density. Then
\begin{enumerate}
\item\label{c:mucomp:tilde} $\Tilde\Delta_\mu$ is globally defined;
\item\label{c:mucomp:hat} $\mu$ is compatible if and only if $\Hat\Delta_\mu=\Delta_\mu$;
\item\label{c:mucomp:square} $\mu$ is compatible if and only if $\Delta\mu^{\frac12}=0$.
\end{enumerate}
\end{Cor}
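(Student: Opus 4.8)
The plan is to reduce all three assertions to the single local quantity $F_S = \triangle S - \tfrac14(S,S)$ from \eqref{e:FS}, exploiting the identification $\Hat\Delta_\mu = \Tilde\Delta_\mu$ of Lemma~\ref{l:hattilde} together with the characterization of compatibility from Remark~\ref{r:DeltaD}. Fixing a Darboux chart and writing $\mu = \pm\EE^S\mu_\text{stand}$, recall that $\Delta_\mu^2 = 0$ holds if and only if $F_S = 0$ (the ``only if'' using that $F_S$ is odd), and that this local condition, imposed in every chart, is precisely what compatibility of $\mu$ means. Thus every statement below will be shown to be locally equivalent to $F_S = 0$, and the global claims then follow chart by chart.

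For part~(\ref{c:mucomp:tilde}), I would argue that $\Hat\Delta_\mu$ is manifestly globally defined: it is assembled in \eqref{e:hatDelta} from the canonical operator $\Delta$, which is globally defined by Theorem/Definition~\ref{td:Deltacan}, and from the globally defined half-densities $\mu^{\pm 1/2}$ (here orientability of $\calM$ is what lets us extract $\mu^{1/2}$). Since Lemma~\ref{l:hattilde} gives $\Tilde\Delta_\mu = \Hat\Delta_\mu$ in each Darboux chart, and the right-hand side is chart independent, the locally defined operators $\Tilde\Delta_\mu$ patch together into the single global operator $\Hat\Delta_\mu$, proving that $\Tilde\Delta_\mu$ is globally defined.

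For part~(\ref{c:mucomp:hat}), I would compare $\Hat\Delta_\mu$ with $\Delta_\mu$ in a Darboux chart. By the computation in the proof of Lemma~\ref{l:hattilde} (with $\Tilde\Delta_\mu$ as in Remark~\ref{r:tildeDelta}), the difference $\Hat\Delta_\mu - \Delta_\mu = \Tilde\Delta_\mu - \Delta_\mu$ is the zeroth-order operator of multiplication by $\tfrac12 F_S$. Hence $\Hat\Delta_\mu = \Delta_\mu$ in that chart if and only if $F_S = 0$, and globally if and only if $F_S = 0$ in every chart, i.e.\ if and only if $\mu$ is compatible. For part~(\ref{c:mucomp:square}), I would evaluate \eqref{e:hatDelta} on the constant function $f = 1$, which gives $\Delta\mu^{1/2} = (\Hat\Delta_\mu 1)\,\mu^{1/2}$. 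Since $\triangle 1 = 0$ and $X_S(1) = 0$, the local formula for $\Tilde\Delta_\mu = \Hat\Delta_\mu$ yields $\Hat\Delta_\mu 1 = \tfrac12 F_S$, so $\Delta\mu^{1/2} = \tfrac12 F_S\,\mu^{1/2}$. Therefore $\Delta\mu^{1/2} = 0$ if and only if $F_S$ vanishes in every chart, i.e.\ if and only if $\mu$ is compatible; this also recovers the footnote identity $\triangle\EE^{S/2} = \tfrac12 F_S\,\EE^{S/2}$.

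The computations here are light, since the heavy lifting was already done in Lemma~\ref{l:hattilde}. The main point requiring care is the local-to-global bookkeeping: each of the three equivalences is established chart by chart, so I must check that ``compatible'', ``$\Hat\Delta_\mu = \Delta_\mu$'', and ``$\Delta\mu^{1/2} = 0$'' are genuinely global conditions holding if and only if their local avatars, all equivalent to $F_S = 0$, hold in every Darboux chart. The one genuinely nontrivial input is that compatibility is equivalent to $F_S = 0$ rather than merely to $F_S$ being constant, which relies on $F_S$ being odd and hence admitting no nonzero constant value.
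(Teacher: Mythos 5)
Your proposal is correct and follows essentially the same route as the paper: parts (1) and (2) are read off from Lemma~\ref{l:hattilde} and the identification $\Hat\Delta_\mu=\Tilde\Delta_\mu=\Delta_\mu+\tfrac12 F_S$ in each Darboux chart, and part (3) is obtained by setting $f=1$ in \eqref{e:hatDelta} and invoking Remark~\ref{r:DeltaD}. The only difference is that you spell out the local-to-global bookkeeping that the paper dismisses as ``trivial,'' which does no harm.
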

\begin{proof}
The first two properties are now trivial. We prove the third. 
Setting $f=1$ in \eqref{e:hatDelta} yields $\Delta(\mu^{\frac12})=\Hat\Delta_\mu 1\, \mu^\frac12$. Lemma~\ref{l:hattilde} then implies that $\Delta(\mu^{\frac12})=0$ if and only if in every Darboux chart $F_S=\Tilde\Delta_\mu 1 = 0$.
By Remark~\ref{r:DeltaD} this happens if and only if $\mu$ is compatible.
\end{proof}

\subsection{Digression: Darboux expression of $\Delta_\mu^{(s)}$}
In a Darboux chart we have $\mu=\EE^S\mu_\text{stand}$. If $\sigma=f\,\mu_\text{stand}^s$, we then have
$\sigma= \EE^{-sS}f\,\mu^s$, so
\[
\Delta_\mu^{(s)}\sigma = \Delta_\mu( \EE^{-sS}f)\,\mu^s.
\]
By using \eqref{e:DeltaD} and the properties of $\triangle$ in \eqref{es:triangle},
we get
\[
\Delta_\mu^{(s)}\sigma = \left(
\triangle f +\left(s-\frac12\right)(S,f)-\left(
s\triangle S +\frac12 s(s-1)(S,S)
\right)f
\right)\mu_\text{stand}^s.
\]
One can now explictly check that $(\Delta_\mu^{(s)})^t=\Delta_\mu^{(1-s)}$. Moreover, one explicitly sees that the first-order term vanishes for every $S$ if and only if $s=\frac12$. Finally, for $\mu$ compatible, i.e., $\Delta S = \frac14(S,S)$, the zeroth-order term becomes $\frac14(2s-1)s(S,S)$, which vanishes for $s=\frac12$ or $s=0$.

This formula, for $s=\frac12$, yields a different way to prove Corollary~\ref{c:Deltacan} and therefore Theorem/Definition~\ref{td:Deltacan} and Theorem~\ref{t:compdeltadelta}. However, proving this formula is a bit more laborious than proving Proposition~\ref{p:Deltatranspose}.


\section{Odd cotangent bundles}\label{s:oct}
We now focus on an odd cotangent bundle $\Pi T^*M$,\footnote{Here $\Pi$ denotes the change-of-parity functor. It simply means that the fiber coordinates are now regarded as odd.} where $M$ is an ordinary manifold. We will see that it is easy to show the existence of compatible densities in this case. Since every odd symplectic manifold is symplectomorphic to the odd cotangent bundle of its body \cite{Sch93}, this implies that every odd symplectic manifold has a compatible density and therefore that the canonical BV operator is indeed a BV Laplacian on half-densities.  

The first observation is that functions on $\Pi T^*M$ are the same as multivector fields on $M$: $C^\infty(\Pi T^*M)=\Gamma(\Lambda^\bullet TM)$ as super algebras.\footnote{\label{f-mv}The interested reader might also appreciate that $(C^\infty(\Pi T^*M),(\ ,\ ))$ and $(\Gamma(\Lambda^\bullet TM),[\ ,\ ]_\text{SN})$ are naturally isomorphic odd Poisson algebras, where $(\ ,\ )$ is the canonical BV bracket, associated to the canonical symplectic form, and $[\ ,\ ]_\text{SN}$ is the Schouten--Nijenhuis bracket.}

The second observation is about the density bundle on $\Pi T^*M$.
If $\phi_{\alpha\beta}$ denotes a transition function on $M$, then the corresponding transition function on $T^*M$ has a fiber part given by $((\dd\phi_{\alpha\beta})^*)^{-1}$. Therefore, the Berezinian of the corresponding transition function for $\Pi T^*M$ is $(\dd\phi_{\alpha\beta})^2$. This means that densities on $\Pi T^*M$ transform like $2$\ndash densities on $M$. Since we assume $M$ to be oriented, we also have that half\ndash densities on $\Pi T^*M$ transform like top forms on $M$. More precisely, we have a canonical isomorphism
\[
\Gamma(\Dens^\frac12(\Pi T^*M))\simeq\Gamma(\Lambda^\bullet TM)\otimes_{C^\infty(M)} \Omega^n(M),
\]
where $n$ is the dimension of $M$. The next observation is that we have a canonical isomorphism (of $C^\infty(M)$\ndash modules)
\[
\phi\colon\begin{array}[t]{ccc}
\Gamma(\Lambda^\bullet TM)\otimes_{C^\infty(M)} \Omega^n(M) &\to & \Omega^{n-\bullet}(M)\\
X\otimes v &\mapsto & \iota_X v
\end{array}
\]
We can use this isomorphism to define an operator on half-densities:
\[
\calD \coloneqq \phi^{-1}\circ\dd\circ\phi,
\]
where $\dd$ is the de~Rham differential.
One immediately sees that $\calD^2=0$ and that $\calD v=0$ for every top form $v$ on $M$. 

\begin{Prop}\label{p:cancanBV}
The canonical operator $\calD$ is the same as the canonical BV operator $\Delta$ on half-densities. 
\end{Prop}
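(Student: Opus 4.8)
The plan is to reduce the equality $\calD=\Delta$ to a purely local statement and then to a classical identity of Cartan calculus. Both operators are globally and intrinsically defined on half\ndash densities of $\Pi T^*M$: the operator $\Delta$ by Theorem/Definition~\ref{td:Deltacan}, and $\calD=\phi^{-1}\circ\dd\circ\phi$ by its manifestly choice\ndash free construction. Hence it is enough to verify that they agree over an arbitrary coordinate chart $U\subseteq M$. First I would fix coordinates $q^1,\dots,q^n$ on $U$, which induce the odd fiber coordinates $p_1,\dots,p_n$ on $\Pi T^*U$, so that $(q^i,p_i)$ are Darboux coordinates for the canonical symplectic form $\omega=\dd p_i\,\dd q^i$ and, under the identification $C^\infty(\Pi T^*M)=\Gamma(\Lambda^\bullet TM)$, the coordinate $p_i$ corresponds to $\partial/\partial q^i$. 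In this chart $\Delta$ acts by $\Delta(f\,\mu_\text{stand}^{\frac12})=\triangle f\,\mu_\text{stand}^{\frac12}$ with $\triangle=\de_i\de^i$.

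Next I would make the isomorphism $\phi$ explicit in these coordinates. Since densities on $\Pi T^*M$ transform like $2$\ndash densities on $M$, the standard density $\mu_\text{stand}=\dd^nq\,\dd^np$ corresponds to $(\dd^nq)^{\otimes2}$, so that its square root corresponds to the top form $v_0\coloneqq\dd q^1\wedge\cdots\wedge\dd q^n$; that is, $\phi(\mu_\text{stand}^{\frac12})=v_0$. Because functions on $\Pi T^*M$ are the same as multivector fields, I may view $f$ directly as an element of $\Gamma(\Lambda^\bullet TM)$, whence the half\ndash density $\sigma=f\,\mu_\text{stand}^{\frac12}$ corresponds to $f\otimes v_0$ and
\[
\phi(\sigma)=\iota_f v_0 .
\]
Applying $\phi$ to the desired equality $\calD\sigma=\Delta\sigma$, the claim in this chart becomes equivalent to the single identity
\[
\dd\,\iota_f v_0=\iota_{\triangle f}\,v_0
\]
for every multivector field $f$.

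This last identity is the classical fact that the operator $\iota^{-1}\circ\dd\circ\iota$, contraction against a fixed volume form, is the divergence operator on multivector fields (the Koszul generator of the Schouten--Nijenhuis bracket), and that in the coordinates $(q^i,p_i)$ this divergence is precisely $\de_i\de^i=\triangle$. I would prove it by evaluating both sides on a monomial $f=a(q)\,p_{i_1}\cdots p_{i_k}$, i.e.\ on the $k$\ndash vector field $a\,\partial_{i_1}\wedge\cdots\wedge\partial_{i_k}$. The left-hand side is $\dd$ of the $(n-k)$\ndash form $\iota_f v_0$, which by the Leibniz rule and $\dd v_0=0$ collapses (Cartan style) to the terms $\sum_j(\partial_{q^j}a)\,\iota_{\partial_j\wedge\partial_{i_1}\wedge\cdots}v_0$; the right-hand side is $\iota_{\triangle f}v_0$ with $\triangle f=\de_j\de^j f$, where $\de^j$ deletes a $p$\ndash factor and $\de_j=\partial_{q^j}$ differentiates the coefficient. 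Matching the two expressions is then a direct check. Combining this with the first two paragraphs gives $\calD=\Delta$ over every chart, hence globally.

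The part requiring care is the bookkeeping of signs: one must track the Koszul signs generated by the parity change $\Pi$ (equivalently, the reordering of the odd $p_i$ produced by $\de^i=\partial/\partial p_i$) and reconcile them with the signs in the contraction $\iota_f$ and in the de~Rham differential $\dd$. Everything else---the reduction to a chart, the identification $\phi(\mu_\text{stand}^{\frac12})=v_0$, and the consistency checks such as $\calD\mu_\text{stand}^{\frac12}=0=\Delta\mu_\text{stand}^{\frac12}$ (since $\dd v_0=0$ and $\triangle 1=0$)---is formal.
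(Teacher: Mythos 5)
Your proposal is correct and follows essentially the same route as the paper's proof: reduce to a Darboux chart, identify $\phi(\mu_\text{stand}^{\frac12})$ with $\dd^n q$, and verify the key identity $\dd\,\iota_f v_0=\iota_{\triangle f}\,v_0$ on monomials $a(q)\,p_{i_1}\cdots p_{i_k}$, which is exactly the computation the paper carries out. One slip to fix when you do the ``direct check'': your intermediate expression $\sum_j(\partial_{q^j}a)\,\iota_{\partial_j\wedge\partial_{i_1}\wedge\cdots}v_0$ has the wrong form degree --- the Leibniz rule actually gives $\sum_j\partial_j a\,\dd q^j\wedge\iota_{\partial_{i_1}}\cdots\iota_{\partial_{i_k}}v_0$, an $(n-k+1)$-form, and wedging with $\dd q^j$ then \emph{removes} one contraction (with a sign $(-1)^{j-1}$, vanishing unless $j$ is one of the $i_l$) rather than adding one, matching the sign from $\partial/\partial p_{i_j}$ acting on $p_{i_1}\cdots p_{i_k}$.
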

\begin{proof}
We just have to check how $\calD$ acts in Darboux coordinates. Let $q^1,\dots,q^n$ denote coordinates on a chart of $M$ and let $p_1,\dots,p_n$ be the corresponding odd fiber coordinates. By linearity it is enough to consider a half\ndash density $\sigma=g\,\mu_\text{stand}$ with $g(p,q)$ of the form $f(q)\,p_{i_1}\cdots\, p_{i_k}$, with pairwise distinct indices $i_j$. In the identification described above this corresponds to $\sigma = f(q)\,\de_{i_1}\wedge\dots\wedge\de_{i_k}\otimes\dd^nq$, so
\[
\phi(\sigma) = f(q)\,\iota_{\de_{i_1}}\cdots\,\iota_{\de_{i_k}}\dd^nq.
\]
We then have
\[
\dd\phi(\sigma) = \dd f(q)\,\iota_{\de_{i_1}}\cdots\,\iota_{\de_{i_k}}\dd^nq=
\de_if(q)\,\dd q^i\wedge\iota_{\de_{i_1}}\cdots\,\iota_{\de_{i_k}}\dd^nq,
\]
since the form $\iota_{\de_{i_1}}\cdots\,\iota_{\de_{i_k}}\dd^nq$ has a constant coefficient, so it is closed. If the index $i$ is different from all the $i_j$s, the corresponding term vanishes, since $\dd q^i$ is already contained in $\dd^nq$. Otherwise, we have
\[
\begin{split}
\dd q^{i_j}\wedge\iota_{\de_{i_1}}\cdots\,\iota_{\de_{i_k}}\dd^nq
&= (-1)^{j-1} \dd q^{i_j} \wedge\iota_{\de_{i_j}} \iota_{\de_{i_1}}\cdots\,\Hat\iota_{\de_{i_j}} 
\cdots\,\iota_{\de_{i_k}}\dd^nq\\
&=
(-1)^{j-1} \iota_{\de_{i_1}}\cdots\,\Hat\iota_{\de_{i_j}} 
\cdots\,\iota_{\de_{i_k}}\dd^nq,
\end{split}
\]
where the caret denotes omission. 
On the other hand, 
\[
\de_{i_j}(p_{i_1}\cdots\, p_{i_k})=(-1)^{j-1} p_{i_1}\cdots\, \Hat p_{i_j}\cdots\, p_{i_k}.
\]
Therefore, 
\[
\phi(\triangle g\,\mu_\text{stand})=\dd\phi(g\,\mu_\text{stand}),
\]
which shows that $\Delta=\calD$.
\end{proof}
We then have $\Delta v=0 $ for every top form $v$ on $M$ (regarded as a half\ndash density on $\Pi T^* M$).
As a notation, we write $\mu_v$ for the even, nowhere vanishing density $v^2$ on $\Pi T^*M$ associated to a volume form $v$ on $M$. We then have
\[
\Delta\mu_v^{\frac12}=0,
\]
which shows, by point (\ref{c:mucomp:square}) of Corollary~\ref{c:mucomp},
 that $\mu_v$ is a compatible density. We then have the
 \begin{Prop}\label{p:cotcom}
 An odd cotangent bundle possesses compatible densities.
 \end{Prop}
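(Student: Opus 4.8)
The plan is to exhibit an explicit compatible density on $\Pi T^*M$ and then certify it via the square-root characterization of compatibility. Since $M$ is oriented, I would first fix a nowhere-vanishing volume form $v\in\Omega^n(M)$; orientability is exactly what guarantees such a $v$ exists. Using the observation that densities on $\Pi T^*M$ transform like $2$\ndash densities on $M$, I would then form the even, nowhere-vanishing density $\mu_v\coloneqq v^2$ on $\Pi T^*M$. Its square root $\mu_v^{\frac12}$ is a half\ndash density which, under the canonical isomorphism $\Gamma(\Dens^{\frac12}(\Pi T^*M))\simeq\Gamma(\Lambda^\bullet TM)\otimes_{C^\infty(M)}\Omega^n(M)$ followed by $\phi$, corresponds to the top form $v$ itself (the multivector part being the constant function $1$, so that $\phi(1\otimes v)=\iota_1 v = v$).

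Next I would invoke the identification $\Delta=\calD=\phi^{-1}\circ\dd\circ\phi$ established in Proposition~\ref{p:cancanBV}. Because $v$ is a top form, it is closed, so $\calD v=\phi^{-1}\dd v=0$, i.e.\ $\Delta\mu_v^{\frac12}=0$. Finally, by point~(\ref{c:mucomp:square}) of Corollary~\ref{c:mucomp}, the vanishing $\Delta\mu^{\frac12}=0$ is equivalent to $\mu$ being compatible with the odd symplectic structure. Hence $\mu_v$ is a compatible density, and $\Pi T^*M$ possesses compatible densities, as claimed.

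The hard part is not in this final assembly, which is essentially immediate, but in the two inputs it draws on: the identification of the canonical BV operator $\Delta$ with the de~Rham-type operator $\calD$ (Proposition~\ref{p:cancanBV}), and the reformulation of compatibility as $\Delta\mu^{\frac12}=0$ (Corollary~\ref{c:mucomp}). Granting those, the only genuine verification left here is the bookkeeping that $\mu_v^{\frac12}$ really does correspond to the top form $v$ under the stated isomorphism, so that $\Delta$ annihilates it for purely dimensional reasons. I expect no serious obstacle in this last step, since it is a direct unwinding of the identifications recorded just before the statement.
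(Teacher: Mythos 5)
Your argument is correct and is essentially identical to the paper's: the paper also takes a volume form $v$ on the oriented body, sets $\mu_v=v^2$, notes that $\mu_v^{\frac12}$ is $v$ regarded as a half\ndash density so that $\Delta\mu_v^{\frac12}=\calD v=\phi^{-1}\dd v=0$ by Proposition~\ref{p:cancanBV}, and concludes compatibility from point~(\ref{c:mucomp:square}) of Corollary~\ref{c:mucomp}. Your bookkeeping that $\mu_v^{\frac12}$ corresponds to $1\otimes v\mapsto\iota_1 v=v$ is exactly the unwinding the paper relies on.
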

 This implies, by Theorem~\ref{t:compdeltadelta}, that $\Delta=\Delta^{(\frac12)}_{\mu_v}$.
Given a volume form $v$, we can now use \eqref{e:hatDelta} to get $\Delta_{\mu_v}$ on functions. It is an easy exercise to see that
\[
\Delta_{\mu_v} = \phi_v^{-1}\circ\dd\circ\phi_v,
\]
where $\phi_v$ is the ($v$\ndash dependent!) isomorphism
\[
\phi_v\colon\begin{array}[t]{ccc}
C^\infty(\Pi T^*M)=
\Gamma(\Lambda^\bullet TM) &\to & \Omega^{n-\bullet}(M)\\
X &\mapsto & \iota_X v
\end{array}
\]
Note that, for a vector field $X$, we have $\Delta_{\mu_v}X=\Div_vX$. For this reason, $\Delta_{\mu_v}$ is interpreted as the extension to multivector fields of the divergence operator with respect to $v$.

\section{Conclusion of the argument}
It is a theorem (see \cite{Sch93})
that every odd symplectic manifold $\calM$ is, noncanonically, isomorphic to the odd cotangent bundle $\Pi T^*M$ of its body $M$.\footnote{This is a global Darboux theorem. If one is familiar with the proof of Darboux's theorem via Moser's trick, the reason why this now works globally is that it turns out that one has to integrate a vector field in the odd directions, which has no problem of definition domain.}

On $\Pi T^*M$ we have compatible densities (Proposition~\ref{p:cotcom}), e.g., $\mu_v$ for a volume form $v$ on $M$.
If $\Psi\colon\Pi T^*M\to\calM$ is a symplectomorphism, which exists by the above mentioned theorem, then $\Psi_*\mu_v$ is a compatible density on $\calM$ (Remark~\ref{r:symplsendcomp}).

We therefore have---also using Theorem~\ref{t:compdeltadelta}, equation \eqref{e:hatDelta}, and point (\ref{c:mucomp:hat}) of Corollary~\ref{c:mucomp}---our final result:
\begin{Thm}
Every odd symplectic manifold possesses compatible densities. For each compatible density $\mu$, we have
\[
\Delta_\mu^{(\frac12)}=\Delta,
\]
with $\Delta$ the canonical BV operator on half\ndash densities, and the BV Laplacian $\Delta_\mu$ on functions may be recovered from
\[
\Delta_\mu f\,\mu^{\frac12}= \Delta(f\mu^{\frac12}).
\]
\end{Thm}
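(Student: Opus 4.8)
The plan is to assemble ingredients already in place; the only genuinely external input is the global Darboux theorem of \cite{Sch93}. First I would invoke that theorem to fix a symplectomorphism $\Psi\colon\Pi T^*M\to\calM$, where $M$ is the body of $\calM$. Since $M$ is an ordinary (oriented) manifold it carries a volume form $v$, and by Proposition~\ref{p:cotcom} the associated density $\mu_v=v^2$ on $\Pi T^*M$ is compatible. Pushing forward along $\Psi$ and appealing to Remark~\ref{r:symplsendcomp}, which guarantees that a symplectomorphism sends compatible densities to compatible densities, I obtain a compatible density $\Psi_*\mu_v$ on $\calM$. This settles the existence claim.

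Once compatible densities are known to exist, the identity $\Delta_\mu^{(\frac12)}=\Delta$ for every compatible $\mu$ is not a new assertion: it is precisely Theorem~\ref{t:compdeltadelta}, which was proved under the standing hypothesis that compatible densities are available. So the second statement follows with no further work. For the last identity I would use the reconstruction formula \eqref{e:hatDelta}, which defines $\Hat\Delta_\mu$ by $\Hat\Delta_\mu f\,\mu^{\frac12}=\Delta(f\mu^{\frac12})$. Point~(\ref{c:mucomp:hat}) of Corollary~\ref{c:mucomp} tells us that compatibility of $\mu$ is equivalent to $\Hat\Delta_\mu=\Delta_\mu$; hence for a compatible $\mu$ the two operators coincide, and the defining relation for $\Hat\Delta_\mu$ becomes exactly $\Delta_\mu f\,\mu^{\frac12}=\Delta(f\mu^{\frac12})$, as claimed.

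Thus the real content lies entirely upstream: the one substantive ingredient is the global presentation of an arbitrary odd symplectic manifold as the odd cotangent bundle of its body (Schwarz's theorem), together with the computation of Proposition~\ref{p:cotcom} showing that $\mu_v$ is compatible there. I expect the ``main obstacle'' to be conceptual rather than computational---namely recognizing that the theorem is a bookkeeping synthesis of Proposition~\ref{p:cotcom}, Remark~\ref{r:symplsendcomp}, Theorem~\ref{t:compdeltadelta}, and Corollary~\ref{c:mucomp}, each of which has already done the work. No additional calculation is needed, and one should merely take care that the orientability hypotheses required to speak of half\ndash densities and of the volume form $v$ are in force, so that $\mu^{1/2}$ and $\mu_v^{1/2}$ make sense throughout.
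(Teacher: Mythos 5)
Your proposal is correct and follows exactly the paper's own argument: Schwarz's global Darboux theorem plus Proposition~\ref{p:cotcom} and Remark~\ref{r:symplsendcomp} for existence, then Theorem~\ref{t:compdeltadelta}, equation \eqref{e:hatDelta}, and point~(\ref{c:mucomp:hat}) of Corollary~\ref{c:mucomp} for the two identities. No differences worth noting.
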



\appendix

\section{Some background details}\label{a:somebackgroundmaterial}
\subsection{The Lie derivative}
If $\Xi$ is an object (e.g., function, vector field, density) for which a notion of pushforward $\Psi_*$ under a diffeomorphism $\Psi$ exists, we define the Lie derivative with respect to a vector field $X$ as
\[
\LL_X \Xi\coloneqq \left.\fde t\right\rvert_{t=0} (\Phi^X_{-t})_*\Xi.
\]
For an even vector field $X$, the diffeomorphism $\Phi^X_{t}$ is the flow of $X$ at time $t$. If $X$ is odd, the variable $t$ is also odd (so the evaluation at zero in the formula is redundant), and we define the morphism $\Phi^X\colon \Pi\bbR\times\calM\to\calM$ via\footnote{This defines the flow of $X$ at odd time $t$ if and only if $[X,X]=0$. Otherwise, the flow property $\Phi^X_{t+s}=\Phi^X_t\circ\Phi^X_{s}$ is not satisfied, so in general $\Phi^X$ is not a flow.} 
\[
(\Phi^X)^*\colon\begin{array}[t]{ccc}
C^\infty(\calM) & \to & C^\infty(\Pi\bbR\times\calM)\\
f & \mapsto & f + tX(f)
\end{array}
\]
where $t$ is the odd coordinate on $\Pi\bbR$.\footnote{Here $\Pi$ denotes the change-of-parity functor: $\Pi\bbR$ is the superdomain with one odd co\-ordinate.}
Note that in local coordinates $z^\mu$, with $X=X^\mu{\small\fde{z^\mu}}$, we have in both cases
\begin{equation}\label{e:PhiXexp}
(\Phi^X)^\mu \coloneqq (\Phi^X)^*z^\mu = z^\mu + t X^\mu + O(t^2),
\end{equation}
where of course $O(t^2)=0$ in the odd case.
Also note that in both cases we have
\[
\Psi_*(\Phi^X)_* = (\Phi^{\Psi_*X})_*\Psi_*
\]
for every diffeomorphism $\Psi$.\footnote{In the even case, one just observes how solutions of ODEs are mapped under diffeomorphisms. In the odd case, it is just an immediate application of the chain rule.} This implies that
\[
\LL_{\Psi_*X} \Psi_*\Xi = \Psi_* \LL_X\Xi.
\]
If two objects $\Xi_1$ and $\Xi_2$ as above can be multiplied, we immediately get
\begin{equation}\label{e:Lxixi}
\LL_X(\Xi_1\Xi_2)=\LL_X \Xi_1\,\Xi_2 + (-1)^{X\Xi_1} \Xi_1\, \LL_X \Xi_2.
\end{equation}
On a function $f$ or on a vector field $Y$, the Lie derivative is readily computed as
\[
\LL_Xf=X(f)\quad\text{and}\quad\LL_XY = [X,Y].
\]

\subsection{Densities}\label{a:dens}
A density is an object that transforms in such a way that its integration may be defined. For this we have to recall that, if $\Psi$ is a diffeomorphism between two superdomains, then the change-of-variables formula for berezinian integration involves the Berezinian $\Ber(\Psi)$ of the jacobian matrix of $\Psi$ times the sign of the jacobian determinant of the reduction to the body of $\Psi$. Since we assume throughout that the body of our supermanifold is oriented, we will not have this factor. Therefore, given an oriented atlas for $\calM$ that has transition functions $\phi_{\alpha\beta}$, we define the density bundle $\Dens\calM$ as the line bundle with transition functions $\Ber(\phi_{\alpha\beta})^{-1}$. More generally, for any real number $s$, we define the line bundle $\Dens^s\calM$ of $s$\ndash densities
with transition functions $\Ber(\phi_{\alpha\beta})^{-s}$. 
Note that $1$\ndash densities are the same as densities, and $0$\ndash densities are the same as functions.

With this definition the integral of a compactly supported\footnote{The support is defined in terms of the coefficient body functions. Namely, an object (function, vector field, density,\dots) is compactly supported when in its expansion in odd variables all the coefficients are compactly supported functions on the body.} density $\mu$ over $\calM$ is defined as usual (by choosing a partition of unity subordinated to the atlas and by berezinian integration in each chart).

The pushforward of $s$\ndash densities under diffeomorphisms is also defined. If we go to charts, the pushforward is given by the pushforward of the representing function times the Berezinian of the transformation to the power $-s$.


Armed with the pushforward, we can define the Lie derivative of an $s$\ndash density. In particular, we need the following
\begin{Lem}
In local coordinates $p_1,\dots,p_m$ and 
$q^1,\dots,q^n$ (odd and even, respectively), the Lie derivative of the standard density $\mu_\text{stand}=\dd^nq\,\dd^mp$ is
\[
\LL_X {\mu_\text{stand}}= (\de_i X^i-(-1)^X\de^iX_i)\, {\mu_\text{stand}},
\]
where we used the expansion $X= X^i\de_i+X_i\de^i$.
\end{Lem}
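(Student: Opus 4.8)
The plan is to feed the standard density into the very definition of the Lie derivative, $\LL_X\mu_\text{stand}=\left.\fde t\right\rvert_{t=0}(\Phi^X_{-t})_*\mu_\text{stand}$, and read off the coefficient of $t$. Since $\mu_\text{stand}$ has constant representing function $1$ and the pushforward of a density multiplies the (pushed-forward) representing function by the Berezinian of the transformation to the power $-1$, the only thing to compute is
\[
(\Phi^X_{-t})_*\mu_\text{stand}=\Ber(J_{-t})^{-1}\,\mu_\text{stand},
\]
where $J_{-t}$ is the super-Jacobian of $\Phi^X_{-t}$.

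First I would expand $J_{-t}$ from the local formula \eqref{e:PhiXexp}: the components of $\Phi^X_{-t}$ are $z^\mu-tX^\mu+O(t^2)$ (with $O(t^2)=0$ in the odd case), so $J_{-t}=\Id-t\,(\de X)+O(t^2)$, where $(\de X)$ is the supermatrix with entries $\de_\nu X^\mu$. Splitting the indices into the even block ($q$) and the odd block ($p$), the diagonal blocks are $(\de_j X^i)$ and $(\de^b X_a)$. I would then record that $(\de X)$ is homogeneous of parity $|X|$: the coefficient $X^i$ has parity $|X|$ and $X_a$ has parity $|X|+1$, and after differentiating by $\de_j$ (even) or $\de^b$ (odd) all four blocks carry exactly the parities demanded of a homogeneous supermatrix of parity $|X|$. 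Using $\Ber(\Id+N)=1+\operatorname{str}(N)+O(N^2)$ I get $\Ber(J_{-t})^{-1}=1+t\,\operatorname{str}(\de X)+O(t^2)$, so that differentiating at $t=0$ yields $\LL_X\mu_\text{stand}=\operatorname{str}(\de X)\,\mu_\text{stand}$. Finally, the parity-refined supertrace of a homogeneous matrix of parity $|X|$ is $\operatorname{str}(\de X)=\tr(\de_j X^i)-(-1)^{|X|}\tr(\de^b X_a)=\de_i X^i-(-1)^X\de^iX_i$, which is the asserted identity.

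The step demanding the most care is the interplay between the odd flow parameter $t$ (odd precisely when $X$ is odd) and the parity-dependent sign in the supertrace: it is exactly the reordering of $t$ past the odd entries of the diagonal blocks that turns the naive $\tr$ of the odd--odd block into $-(-1)^X\tr$. I would therefore keep $t$ on the left throughout, or split into the even and odd cases, to be sure the single factor $(-1)^X$ is produced correctly; as a sanity check, setting $X$ even with no odd coordinates recovers the classical $\LL_X\mu_\text{stand}=(\de_iX^i)\mu_\text{stand}$.
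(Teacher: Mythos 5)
Your proof is correct and follows essentially the same route as the paper: expand $\Phi^X_{-t}$ to first order using \eqref{e:PhiXexp}, note that the off-diagonal blocks of the super-Jacobian only contribute at $O(t^2)$, compute $\Ber(J_{-t})^{-1}$ to first order in $t$, and read off the coefficient of $t$ as $\de_i X^i-(-1)^X\de^iX_i$. The only (cosmetic) difference is that you package the first-order Berezinian as $\Ber(\Id+N)=1+\operatorname{str}(N)+O(N^2)$ and produce the factor $(-1)^X$ through a parity-refined supertrace, whereas the paper multiplies the determinants of the two diagonal blocks directly and obtains the same factor by commuting the odd parameter $t$ past $\de^i$ --- your explicit flagging of exactly where that sign originates is, if anything, a point in your favor.
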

\begin{proof}
Using \eqref{e:PhiXexp}, we see that\footnote{By $\delta^i_j$ we mean the Kronecker delta, which is equal to $1$ for $i=j$ and to $0$ otherwise.}
\[
\begin{split}
\Ber(\Phi^X_{-t})&=
\Ber\begin{pmatrix}
\delta_i^j - \de_i(tX^j) & 0\\
0 & \delta^i_j -\de^i(tX_j)
\end{pmatrix}
+O(t^2)\\
&=
- \de_i(tX^i) -\de^i(tX_i) +O(t^2)\\
&=
- t\de_i(X^i) -(-1)^Xt\de^i(X_i) +O(t^2),
\end{split}
\]
using that the paritiy of $t$ is the same as that of $X$.
Applying the definitions of Lie derivative and of pushforward of a density yields the result.
\end{proof}
Applying \eqref{e:Lxixi} to the case of densities, we get
\begin{equation}\label{e:Lhmu}
\LL_X(h\mu) = X(h)\,\mu+(-1)^{Xh} h\,\LL_X\mu
\end{equation}
for every function $h$. In particular, we get the general formula for the Lie derivative in local coordinates:
\[
\begin{split}
\LL_X \mu&= (X(h)+(-1)^{Xh} h (\de_i X^i-(-1)^X\de^iX_i))\, {\mu_\text{stand}}\\
&= (X(h)+ (\de_i X^i-(-1)^X\de^iX_i)h)\, {\mu_\text{stand}}\\
&=X(h)\mu_\text{stand} + (\de_i X^i-(-1)^X\de^iX_i)\mu,
\end{split}
\]
with $\mu=h\mu_\text{stand}$. This implies the following
\begin{Lem}
For every function $f$, vector field $X$, and density $\mu$, we have
\[
\LL_{fX}\mu = f\LL_X\mu+(-1)^{fX}X(f)\mu.
\]
\end{Lem}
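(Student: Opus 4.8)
The plan is to reduce everything to the local coordinate formulas already at hand, since both sides of the asserted identity are local and the Lie derivative of a density is defined chart by chart. I would fix a coordinate chart with even coordinates $q^i$ and odd coordinates $p_i$, take $f$ and $X$ of homogeneous parity (the general case following by bilinearity), and write $\mu=h\,\mu_\text{stand}$ for the unique coefficient function $h$. The general local expression derived just above the statement, namely $\LL_X\mu = X(h)\,\mu_\text{stand} + (\de_i X^i - (-1)^X\de^i X_i)\mu$, then reduces the whole problem to understanding how the coordinate divergence coefficient $\Div_{\mu_\text{stand}}X = \de_i X^i - (-1)^X\de^i X_i$ transforms under the rescaling $X\mapsto fX$, together with the trivial observation $(fX)(h)=f\,X(h)$.

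The single computational step, then, is to establish the product rule
\[
\Div_{\mu_\text{stand}}(fX) = f\,\Div_{\mu_\text{stand}}X + (-1)^{fX}X(f).
\]
Here I would use $(fX)^i=fX^i$ and $(fX)_i=fX_i$ and expand $\de_i(fX^i)$ and $\de^i(fX_i)$ by the graded Leibniz rule, remembering that $\de_i=\de/\de q^i$ is even while $\de^i=\de/\de p_i$ is odd, so that $\de^i(fX_i)=(\de^i f)X_i+(-1)^f f\,\de^i X_i$. Collecting the terms in which the derivative lands on a component of $X$ reassembles $f\,\Div_{\mu_\text{stand}}X$, while the terms in which the derivative lands on $f$ must be reordered (graded-commutatively) so as to recognize $X(f)=X^i\de_i f + X_i\de^i f$. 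Substituting this product rule and $(fX)(h)=f\,X(h)$ back into the local expression for $\LL_{fX}\mu$, and expanding $f\LL_X\mu$ by the very same formula, the two sides visibly coincide.

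The only real obstacle is sign bookkeeping, and it is worth flagging two distinct sources of signs. First, the exponents must be read with the paper's conventions: $(-1)^X$ denotes the parity of $X$, whereas $(-1)^{fX}$ denotes the product of the parities of $f$ and $X$; in particular, when the local divergence formula is applied to $fX$, the relevant parity sign of the vector field is $(-1)^{f+X}$, not $(-1)^{fX}$. The crucial cancellation is that the term carrying $\de^i X_i$ acquires a factor $(-1)^{f+X}(-1)^f=(-1)^X$ (the doubled $f$ in the exponent drops), which is exactly what is needed to recover $f\,\Div_{\mu_\text{stand}}X$. Second, reassembling $X(f)$ requires commuting coefficient functions past one another; checking that the sign on the $X_i\,\de^i f$ term comes out as $(-1)^{fX}$ uses the parity identity $(|f|+1)(|X|+1)+|f|+|X|\equiv |f||X|+1$. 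Once these two sign identities are in hand the verification is immediate, and I expect no further difficulty.
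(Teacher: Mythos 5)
Your proposal is correct and follows essentially the same route as the paper's own proof: both work in a local chart, apply the general expression $\LL_X\mu = X(h)\,\mu_\text{stand} + (\de_i X^i-(-1)^X\de^iX_i)\mu$ to $fX$ (of parity $f+X$), split via the graded Leibniz rule into terms hitting $X$ and terms hitting $f$, and verify exactly the two sign identities you flag, namely $(-1)^{f+X}(-1)^f=(-1)^X$ and $(f+1)(X+1)+f+X\equiv fX+1 \pmod 2$. Your packaging of the middle step as a product rule for $\Div_{\mu_\text{stand}}$ is a harmless cosmetic reformulation of the same computation (and, properly, you derive it directly rather than citing part~(\ref{p:div:f}) of Proposition~\ref{p:div}, which would be circular since that proposition rests on this lemma).
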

\begin{proof}
It is enough to check the formula in local coordinates. We have
\[
\begin{split}
\LL_{fX} \mu&= fX(h)\mu_\text{stand} + (\de_i (fX^i)-(-1)^{f+X}\de^i(fX_i))\mu\\
&= fX(h)\mu_\text{stand} + f\,(\de_i X^i-(-1)^X\de^iX_i)\mu +  (\de_i fX^i-(-1)^{f+X}\de^ifX_i)\mu\\
&= f\LL_X\mu + ((-1)^{Xf} X^i\de_if-(-1)^{f+X}(-1)^{(f+1)(X+1)}X_i\de^if)\mu\\
&= f\LL_X\mu+(-1)^{fX}X(f)\mu.
\end{split}
\]
\end{proof}

In the case of densities, the pushforward is related to the change-of-variables formula. Namely, if $\Psi\colon\calM\to\calN$ is a diffeomorphism (preserving the orientation of the bodies), then
\begin{equation}\label{e:cov}
\int_\calN\Psi_*\mu=\int_\calM\mu
\end{equation}
for every compactly supported density $\mu$.
This leads to 
\begin{Lem}
For every vector field $X$ and density $\mu$, one of which is compactly supported, we have
\[
\int_\calM\LL_X\mu=0.
\]
\end{Lem}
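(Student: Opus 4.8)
The plan is to read the statement off from the change-of-variables formula \eqref{e:cov} together with the defining property $\LL_X\mu=\ddt(\Phi^X_{-t})_*\mu$. First I would set $\Psi_t\coloneqq\Phi^X_{-t}$ and apply \eqref{e:cov} to each member of this family; since $\Psi_t$ is connected to the identity it preserves the orientation of the body, so \eqref{e:cov} is applicable and yields
\[
\int_\calM(\Phi^X_{-t})_*\mu=\int_\calM\mu
\]
for every $t$. The right-hand side does not depend on $t$, so differentiating at $t=0$ and pulling $\ddt$ inside the integral gives $\int_\calM\LL_X\mu=0$, which is exactly the claim. All the content is therefore in justifying these two moves.

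Second, I would dispatch the even and odd cases separately, since the justifications differ slightly. In the even case, the assumption that $X$ or $\mu$ is compactly supported makes the flow $\Phi^X_{-t}$ defined for all small $t$ on a neighbourhood of the relevant compact set, hence an honest diffeomorphism there, and the interchange of $\ddt$ with $\int_\calM$ is the standard one for a smooth, compactly supported integrand. In the odd case, $t$ is odd and $\ddt$ simply extracts the coefficient of $t$; by \eqref{e:PhiXexp} the expansion $(\Phi^X_{-t})_*\mu=\mu+t\,\LL_X\mu$ is exact, so the map $t\mapsto\int_\calM(\Phi^X_{-t})_*\mu$ is affine in $t$ and, being constant by \eqref{e:cov}, must have vanishing $t$-coefficient, which is precisely $\int_\calM\LL_X\mu$. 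One should flag that, although $\Phi^X$ is a genuine flow only when $[X,X]=0$, a single application of \eqref{e:cov} requires only that $\Phi^X_{-t}$ be an invertible morphism, which it is.

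The step I expect to be the main obstacle is making the odd case airtight: \eqref{e:cov} is stated for diffeomorphisms of supermanifolds, and one must be satisfied that it persists for the $t$-parametrized morphism $\Phi^X_{-t}$ with $t$ odd. To bypass this concern entirely, I would keep in reserve a purely local proof based on the explicit expression for $\LL_X\mu$ derived in the Appendix. Writing $\mu=h\,\mu_\text{stand}$ in a chart, a short sign-tracking computation reorganizes the coefficient as a total super-divergence,
\[
\LL_X\mu=\bigl(\de_i(X^ih)-(-1)^X\de^i(hX_i)\bigr)\,\mu_\text{stand},
\]
after which the even term integrates to zero by ordinary Stokes (using compact support) and the odd term integrates to zero because berezinian integration kills any $\de^i$-derivative; a partition of unity subordinate to the atlas then globalizes these chartwise vanishings. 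This alternative avoids all questions about completeness of the flow and about the odd-time change of variables, at the cost of the elementary but sign-sensitive rewriting displayed above.
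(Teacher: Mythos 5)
Your main argument is the same as the paper's: apply the change-of-variables formula \eqref{e:cov} to $(\Phi^X_{-t})_*\mu$ and differentiate at $t=0$, interchanging $\fde t$ with the integral; the paper does not even separate the even and odd cases, since the formal computation is identical in both. The one place where your primary argument does not quite run is the case where $X$ is compactly supported but $\mu$ is not: then $\int_\calM\mu$ need not converge, and \eqref{e:cov} is only stated for compactly supported densities, so the identity $\int_\calM(\Phi^X_{-t})_*\mu=\int_\calM\mu$ you want to differentiate is not available. (Your remark about the flow being defined ``on a neighbourhood of the relevant compact set'' addresses completeness of the flow, not this convergence issue.) The paper disposes of this case with a one-line trick: replace $\mu$ by $\Tilde\mu=\rho\mu$ with $\rho$ a compactly supported bump function equal to $1$ on $\supp X$; since $\LL_X\mu=\LL_X\Tilde\mu$, the already-proved compactly supported case applies. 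Your reserve argument in local coordinates (total super-divergence plus partition of unity) does legitimately cover this case, since $\LL_X\mu$ is supported in $\supp X$, so nothing is irreparably missing --- but as written the division of labor is inverted: you hold the local argument in reserve against the odd-time subtlety (which is harmless), while it is actually needed, or else the bump-function substitution is, for the non-compactly-supported $\mu$ case.
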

\begin{proof}
First consider the case when $\mu$ is compactly supported. In this case, $(\Phi^X_{-t})_*\mu$ and hence $\LL_X\mu$ 
are also compactly supported. We then have
\[
\int_\calM\LL_X\mu=\int_\calM \left.\fde t\right\rvert_{t=0} (\Phi^X_{-t})_*\mu
= \left.\fde t\right\rvert_{t=0} \int_\calM (\Phi^X_{-t})_*\mu
= \left.\fde t\right\rvert_{t=0} \int_\calM \mu=0,
\]
where we have also used \eqref{e:cov}.

If on the other hand $\mu$ is not compactly supported, but $X$ is, we replace $\mu$ with $\Tilde\mu=\rho\mu$, where $\rho$ is a compactly supported bump function which is identically equal to $1$ on the support of $X$. Since $\LL_X\mu=\LL_X\Tilde\mu$, we get the result by the previous case.
\end{proof}

If we have an even, nowhere vanishing density $\mu$, then every other density can be written as $f\mu$ for a uniquely determined function $f$. Therefore, we can define the divergence operator via the following formula:
\[
\Div_\mu X \,\mu= \LL_X\mu.
\]
The properties of the Lie derivative immediately imply properties for the divergence operator:
\begin{Prop}\label{p:div}
Let $\mu$ be an even, nowhere vanishing density $\mu$ and $X$ a vector field. Then
\begin{enumerate}
\item\label{p:div:Psi} for every diffeomorphism $\Psi$,
\[
\Psi_*\Div_\mu X= \Div_{\Psi_*\mu} \Psi_*X;
\]
\item\label{p:div:hmu} for every even, nowhere vanishing function $h$,
\[
\Div_{h\mu} X = \Div_\mu X + \frac1hX(h);
\]
\item\label{p:div:f} for every function $f$,
\[
\Div_\mu (fX) = f\Div_\mu X+(-1)^{fX} X(f);
\]
\item\label{p:div:int} under the additional assumption that $X$ is compactly supported,
\[
\int_\calM \Div_\mu X \,\mu=0.
\]
\end{enumerate} 
\end{Prop}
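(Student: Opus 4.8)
The plan is to derive all four identities directly from the defining relation $\Div_\mu X\,\mu=\LL_X\mu$, translating each statement into the corresponding property of the Lie derivative already established, and then cancelling the even, nowhere vanishing density $\mu$. This cancellation is legitimate precisely because multiplication by such a $\mu$ is an isomorphism between functions and densities, so an equality of densities of the form $a\,\mu=b\,\mu$ forces $a=b$.

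For (\ref{p:div:Psi}), I would push the defining relation forward by $\Psi$. Since $\Psi_*$ is multiplicative on the product of a function and a density, the left-hand side becomes $(\Psi_*\Div_\mu X)\,\Psi_*\mu$, while the right-hand side becomes $\Psi_*\LL_X\mu=\LL_{\Psi_*X}\Psi_*\mu=(\Div_{\Psi_*\mu}\Psi_*X)\,\Psi_*\mu$, using the equivariance $\LL_{\Psi_*X}\Psi_*\Xi=\Psi_*\LL_X\Xi$ of the Lie derivative and the definition of the divergence relative to $\Psi_*\mu$. Cancelling $\Psi_*\mu$ gives the claim.

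For (\ref{p:div:hmu}), I would start from $\Div_{h\mu}X\,h\mu=\LL_X(h\mu)$ and expand the right-hand side by \eqref{e:Lhmu}; because $h$ is even, the sign $(-1)^{Xh}$ equals $1$, so $\LL_X(h\mu)=X(h)\,\mu+h\,\Div_\mu X\,\mu$, and dividing by $h\mu$ yields the stated formula. For (\ref{p:div:f}), I would instead use $\Div_\mu(fX)\,\mu=\LL_{fX}\mu$ together with the preceding lemma $\LL_{fX}\mu=f\LL_X\mu+(-1)^{fX}X(f)\mu$, and cancel $\mu$. For (\ref{p:div:int}), integrating the defining relation gives $\int_\calM\Div_\mu X\,\mu=\int_\calM\LL_X\mu$, which vanishes by the lemma on the integral of a Lie derivative, valid here because $X$ is compactly supported.

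The computations are all immediate; the only points requiring attention are the sign bookkeeping---specifically the observation that the evenness of $h$ trivializes $(-1)^{Xh}$ in (\ref{p:div:hmu}), and the correct placement of the Koszul sign $(-1)^{fX}$ in (\ref{p:div:f})---and the justification that one may cancel the even, nowhere vanishing density $\mu$ from both sides of each identity. I expect no genuine obstacle beyond this careful tracking of parities.
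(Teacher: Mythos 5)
Your proposal is correct and matches the paper's intent exactly: the paper states this proposition with no written proof, introducing it by saying the properties of the Lie derivative ``immediately imply'' it, and your four derivations---pushing the defining relation $\Div_\mu X\,\mu=\LL_X\mu$ forward under $\Psi$, expanding via \eqref{e:Lhmu}, invoking the lemma on $\LL_{fX}\mu$, and integrating against the lemma on $\int_\calM\LL_X\mu$---are precisely the intended arguments. Your explicit attention to cancelling the even, nowhere vanishing $\mu$ and to the sign $(-1)^{Xh}=1$ for even $h$ is a welcome bit of care, not a deviation.
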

Moreover, from the local coordinate expression for the Lie derivative, we get
\begin{subequations}
\begin{align}
\Div_{\mu_\text{stand}} X &= \de_i X^i-(-1)^X\de^iX_i,\label{se:divstand}\\
\intertext{and, rearranging the terms,}
\Div_{\mu} X &= \frac1h(\de_i (hX^i)-(-1)^X\de^i(hX_i)),
\end{align}
\end{subequations}
for $\mu=h\mu_\text{stand}$.

\section{Applications of the BV formalism}\label{a:app}
We recall here the main reason why the BV formalism was introduced: to study the invariance of certain integrals.

We start considering the case of Section~\ref{s-standLap}, where we have Darboux coordinates $p_1,\dots,p_n$ and  $q^1,\dots,q^n$. For the sake of the argument, we rearrange each pair $(p_i,q^i)$ into a new pair $(x_i,y^i)$, 
with $x_i=p_i$ and $y^i=q^i$ for some $i$s and $x_i=q^i$ and $y^i=p_i$ for the other $i$s.
Note that 
$x_i$ has opposite parity to $y^i$ (but we do not insist on $y^i$ being even). We now use the shorthand notation 
\[
\de_i = \fde{y^i}\quad\text{and}\quad\de^i=\fde{x_i}.
\]
Note that the BV Laplacian still reads $\triangle=\de_i\de^i$.

If $\psi$ is an odd function of the variables $y$, it makes sense, in terms of parity, to set $x_i=(-1)^{x_i}\de_i\psi$ (we will explain the reason for the signs in a moment).
For a function $f$ in the $(x,y)$ variables,
we define
\[
\int_{\calL_\psi} f \coloneqq \int f|_{x_j=(-1)^{x_j}\de_j\psi} \; \dd^n y,
\]
where $|_{x_j=(-1)^{x_j}\de_j\psi}$ means that we set each $x$ variable to the corresponding right-hand-side expression.
The reason for the notation is that $\calL_\psi$ is a Lagrangian submanifold determined by $\psi$. In fact,
$\omega=\dd x_i\dd y^i=\dd (x_i\dd y^i)$ and 
\[
(x_i\dd y^i)|_{x_i=(-1)^{x_i}\de_i\psi}=\dd\psi.
\]
The first result is the BV lemma.
\begin{Lem}
Suppose that $f=\triangle g$ and $g$ is integrable. Then $\int_{\calL_\psi} f=0$ for every $\psi$.
\end{Lem}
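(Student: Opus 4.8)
The plan is to show that the restriction $f|_{\calL_\psi}=(\triangle g)|_{\calL_\psi}$ is a total derivative in the $y$ variables, after which the vanishing of the integral follows from the standard fact that $\int\de_i(\cdots)\,\dd^n y=0$: for an even $y^i$ this is integration by parts, whose boundary terms vanish because integrability of $g$ forces the relevant restricted functions to decay, while for an odd $y^i$ the Berezin integral of a $y^i$\ndash derivative vanishes identically. So the whole problem reduces to exhibiting $f|_{\calL_\psi}$ as a divergence.

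First I would introduce the functions of $y$ alone
\[
V^i\coloneqq (\de^i g)\big|_{\calL_\psi},
\]
obtained by restricting $\de^i g=\partial g/\partial x_i$ to $\calL_\psi$, and compute the divergence $\de_i V^i$ by the (super) chain rule. Differentiating through the substitution $x_j=(-1)^{x_j}\de_j\psi$ produces two kinds of terms: a ``direct'' term $(\de_i\de^i g)|_{\calL_\psi}=(\triangle g)|_{\calL_\psi}=f|_{\calL_\psi}$, in which $\de_i$ hits the explicit $y$\ndash dependence of $g$, and a ``cross'' term in which $\de_i$ hits the substituted value $x_j=(-1)^{x_j}\de_j\psi$. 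The cross term is a contraction, over $i$ and $j$, of the second derivatives $\de_i\de_j\psi$ of $\psi$ against the restricted fibre derivatives $(\de^j\de^i g)|_{\calL_\psi}$.

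The crux of the argument---and the place where the sign bookkeeping must be done with care---is that this cross term vanishes. The second derivatives $\de_i\de_j\psi$ are (super)symmetric under $i\leftrightarrow j$, whereas the repeated fibre derivatives $\de^j\de^i g$ carry the opposite (super)symmetry, precisely because each $x_i$ has parity opposite to the corresponding $y^i$; contracting a (super)symmetric array against a (super)antisymmetric one gives zero. This is exactly why the signs $(-1)^{x_i}$ were inserted in the defining relation $x_i=(-1)^{x_i}\de_i\psi$ of $\calL_\psi$: they are chosen so that, once the chain-rule signs and the factors $(-1)^{x_j}$ from the substitution are accounted for, the two symmetry types are genuinely opposite and the contraction cancels. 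I expect verifying this precise cancellation to be the main obstacle; the rest is formal.

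With the cross term gone we obtain $f|_{\calL_\psi}=\de_i V^i$, so that
\[
\int_{\calL_\psi} f=\int \de_i V^i\,\dd^n y=0
\]
by the boundary/Berezin argument of the first paragraph, using integrability of $g$ to kill the even-variable boundary contributions. This completes the proof.
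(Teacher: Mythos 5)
Your proposal is correct and follows essentially the same route as the paper: restrict $\de^i g$ to $\calL_\psi$, apply the chain rule so that $(\triangle g)|_{\calL_\psi}$ appears as a total divergence plus a cross term, kill the cross term by the symmetry/antisymmetry of $\de_i\de_k\psi$ versus $\de^k\de^i g$ under index exchange, and integrate the divergence to zero. The paper is merely terser on the two points you flag (the sign bookkeeping in the cancellation and the vanishing of the integral of a derivative), so no further comment is needed.
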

\begin{proof}
We have
\[
\begin{split}
\sum_i \de_i(\de^i g)|_{x_j=(-1)^{x_j}\de_j\psi}
& =
\sum_i (\de_i\de^i g)|_{x_j=(-1)^{x_j}\de_j\psi}\\
& \phantom=\  +
\sum_{ik} (\de_i( (-1)^{x_k}\de_k\psi)\,  \de^k\de^i g)|_{x_j=(-1)^{x_j}\de_j\psi}.
\end{split}
\]
Since $\sum_{ik}  (-1)^{x_k}(\de_i\de_k\psi\,  \de^k\de^i g)=0$, as one can easily see by exchanging the indices, we get
\[
 (\triangle g)|_{x_j=(-1)^{x_j}\de_j\psi}
=
\sum_i \de_i(\de^i g)|_{x_j=(-1)^{x_j}\de_j\psi},
\]
so $\int_{\calL_\psi} f $ vanishes.
\end{proof}
This leads to the fundamental BV theorem.
\begin{Thm}\label{t:fundBV}
Let $\psi_t$ be a family of odd functions in the $y$ variables depending smoothly on the even parameter $t$.
If $f$ is integrable, on every $\calL_{\psi_t}$, and $\triangle f=0$, then
\[
I_t \coloneqq \int_{\calL_{\psi_t}} f 
\]
is constant.
\end{Thm}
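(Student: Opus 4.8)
The plan is to show directly that $\frac{\dd}{\dd t}I_t = 0$ by differentiating under the integral sign and then recognizing the resulting integrand as the restriction to $\calL_{\psi_t}$ of a $\triangle$\ndash exact function, to which the BV lemma applies.

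First I would differentiate. The measure $\dd^n y$ carries no $t$\ndash dependence, and the entire $t$\ndash dependence of $I_t$ sits in the substitution $x_k=(-1)^{x_k}\de_k\psi_t$. Differentiating under the integral sign gives
\[
\frac{\dd}{\dd t}I_t = \int\frac{\dd}{\dd t}\bigl(f|_{x_j=(-1)^{x_j}\de_j\psi_t}\bigr)\,\dd^n y.
\]
Writing $\dot\psi_t\coloneqq\frac{\dd}{\dd t}\psi_t$, the substituted values $\xi_k\coloneqq(-1)^{x_k}\de_k\psi_t$ satisfy $\dot\xi_k=(-1)^{x_k}\de_k\dot\psi_t$, so by the super chain rule (using that $t$ is even, exactly as in the cross\ndash term computation inside the proof of the BV lemma) the integrand becomes
\[
\frac{\dd}{\dd t}\bigl(f|_{\calL_{\psi_t}}\bigr) = \sum_k (-1)^{x_k}(\de_k\dot\psi_t)\,(\de^k f)\big|_{x_j=(-1)^{x_j}\de_j\psi_t}.
\]

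The key step is to recognize this expression as $\triangle(\dot\psi_t f)$ restricted to $\calL_{\psi_t}$. Since $\dot\psi_t$ is a function of the $y$ variables alone, we have $\triangle\dot\psi_t=0$ (for the same reason as for $\psi_t$ itself), and by hypothesis $\triangle f=0$. Applying the Leibniz rule \eqref{e:trianglefg} to $\triangle(\dot\psi_t f)$ therefore kills both the $(\triangle\dot\psi_t)f$ and the $\dot\psi_t\,\triangle f$ terms, leaving only the bracket term; a short sign computation (keeping track of $|x_i|+|y^i|=1$) identifies this with $\sum_k(-1)^{x_k}(\de_k\dot\psi_t)(\de^k f)$. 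Hence $\frac{\dd}{\dd t}(f|_{\calL_{\psi_t}})=\triangle(\dot\psi_t f)|_{\calL_{\psi_t}}$, and so
\[
\frac{\dd}{\dd t}I_t = \int_{\calL_{\psi_t}}\triangle(\dot\psi_t f).
\]

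Finally I would invoke the BV lemma with $g=\dot\psi_t f$: assuming, as part of the hypotheses, that $\dot\psi_t f$ is integrable on $\calL_{\psi_t}$, the lemma gives $\int_{\calL_{\psi_t}}\triangle(\dot\psi_t f)=0$, whence $\frac{\dd}{\dd t}I_t=0$ and $I_t$ is constant. I expect the main obstacle to be the sign bookkeeping in matching the chain\ndash rule cross\ndash term with the Leibniz cross\ndash term; once the identification $\frac{\dd}{\dd t}(f|_{\calL_{\psi_t}})=\triangle(\dot\psi_t f)|_{\calL_{\psi_t}}$ is secured, the conclusion is immediate from the BV lemma. A minor technical point is justifying differentiation under the integral sign, which is routine for a family $\psi_t$ depending smoothly on $t$ under the assumed integrability.
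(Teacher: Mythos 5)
Your proposal is correct and follows essentially the same route as the paper: differentiate under the integral sign, identify the resulting cross term $\sum_i(-1)^{x_i}\de_i\dot\psi\,\de^i f$ with $\triangle(\dot\psi\,f)$ restricted to $\calL_{\psi_t}$ (using $\triangle f=0$), and conclude by the BV lemma applied to $g=\dot\psi\,f$. The only cosmetic difference is that you invoke the Leibniz rule \eqref{e:trianglefg} to isolate the bracket term, whereas the paper computes $\triangle(\dot\psi\,f)$ directly; the integrability caveat you note is likewise left implicit in the paper.
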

\begin{proof}
We have
\[
\dot I_t =  \int \sum_i 
((-1)^{x_i}\de_i\dot\psi\,  \de^if)|_{x_j=(-1)^{x_j}\de_j\psi} \; \dd^n y,
\]
where the dot denotes derivative with respect to $t$. {}From
\[
\triangle(\dot\psi\, f) = \sum_i\de_i\de^i(\dot\psi\, f)= \sum_i(-1)^{x_i} \de_i(\dot\psi\, \de^if)
= \sum_i (-1)^{x_i}\de_i\dot\psi\,  \de^if -\dot\psi\,\triangle f,
\]
we get 
\[
\dot I_t = \int  \triangle(\dot\psi\, f)|_{x_j=(-1)^{x_j}\de_j\psi} \; \dd^n y=
\int_{\calL_{\psi_t}} \triangle(\dot\psi\, f) = 0,
\]
where we have used the previous lemma.
\end{proof}

The BV theorem is used in quantum gauge theories,\footnote{In this application, the odd symplectic manifold is actually infinite dimensional, so $\triangle$ is not defined. What one does is either to proceed formally or to regularize the theory, e.g., by replacing the space of fields with a finite\ndash dimensional approximation.} where the choice of $\psi$ (called the gauge-fixing fermion) corresponds to the choice of a gauge fixing and the invariance under deformations of this choice yields the independence of the theory from the gauge fixing, assuming of course that $\triangle f=0$.

Using the results of Sections~\ref{s:BVlapf} and~\ref{s:BVlapd}, one can extend the BV lemma and the BV theorem globally.
The main observation is that the restriction of a half\ndash density on an odd symplectic manifold to a Lagrangian submanifold defines a density there \cite{Sch93}.
\begin{Thm}\label{t:globBV}
Let $\calM$ be an odd symplectic manifold and $\sigma$ a half\ndash density. The following hold:
\begin{enumerate}
\item If $\sigma=\Delta\tau$ for some $\tau$, then
\[
\int_\calL\sigma=0
\]
for every a Lagrangian submanifold $\calL$ of $\calM$ on which $\sigma$ is integrable.
\item If $\Delta\sigma =0$ and $\calL_t$ is a smooth family of Lagrangian submanifolds of $\calM$ on which $\sigma$ is integrable, then
\[
I_t \coloneqq \int_{\calL_t} \sigma
\]
is constant.
\end{enumerate}
\end{Thm}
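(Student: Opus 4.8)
The plan is to deduce both statements from their local counterparts proved in Appendix~\ref{a:app}---the BV lemma and Theorem~\ref{t:fundBV}---by localizing with a partition of unity and invoking the two structural facts established above: that in every Darboux chart the canonical operator $\Delta$ sends $f\,\mu_\text{stand}^{\frac12}$ to $\triangle f\,\mu_\text{stand}^{\frac12}$ (Theorem/Definition~\ref{td:Deltacan}), and that the restriction of a half\ndash density to a Lagrangian submanifold is a density, which in a Darboux chart presenting $\calL$ as a graph $\calL_\psi$ is represented by $f|_{\calL_\psi}\,\dd^n y$. A preliminary observation is that, although $\calL$ need not be globally a graph, in each Darboux chart one may present it as some $\calL_\psi$ after exchanging the roles of conjugate coordinates; this is precisely the freedom, built into Appendix~\ref{a:app}, of relabelling each conjugate pair $(p_i,q^i)$ as $(x_i,y^i)$, so the local results apply verbatim chart by chart.

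For the first statement I would start from $\sigma=\Delta\tau$ and choose an atlas of Darboux charts $\{U_\alpha\}$ covering a neighborhood of $\calL$ in $\calM$, together with a subordinate partition of unity $\{\rho_\alpha\}$. Since $\Delta$ is a differential operator, linearity gives $\Delta\tau=\sum_\alpha\Delta(\rho_\alpha\tau)$ near $\calL$, and each $\rho_\alpha\tau$ is compactly supported in a single chart. Writing $\rho_\alpha\tau=g_\alpha\,\mu_\text{stand}^{\frac12}$ there, Theorem/Definition~\ref{td:Deltacan} gives $\Delta(\rho_\alpha\tau)=\triangle g_\alpha\,\mu_\text{stand}^{\frac12}$, whose restriction to $\calL_\psi$ is $\triangle g_\alpha|_{\calL_\psi}\,\dd^n y$. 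As $g_\alpha$ is compactly supported, hence integrable, the BV lemma of Appendix~\ref{a:app} yields $\int_{\calL}\Delta(\rho_\alpha\tau)=0$ for each $\alpha$; summing over $\alpha$ gives $\int_\calL\sigma=0$. (Here one uses implicitly that the chart presentations glue into a single well-defined density on $\calL$, which is exactly the cited fact that restriction of a half\ndash density is a density.)

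For the second statement I would differentiate $I_t$ and express $\dot I_t$ as the integral over $\calL_t$ of the variation of the restricted density. Presenting the family locally as graphs $\calL_{\psi_t}$ with generating function $\psi_t$, the computation in the proof of Theorem~\ref{t:fundBV}---now legitimate because $\Delta\sigma=0$ forces $\triangle f=0$ in every chart---identifies the integrand with $\triangle(\dot\psi_t f)|_{\calL_{\psi_t}}\,\dd^n y$, i.e.\ with the restriction of the $\Delta$\ndash exact half\ndash density $\Delta(\dot\psi_t\,\sigma)$. Thus, at least when $\psi_t$ is a globally defined function, $\dot I_t=\int_{\calL_t}\Delta(\dot\psi_t\,\sigma)$, which vanishes by the first part; hence $I_t$ is constant. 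This covers the physically relevant situation, where $\psi_t$ is the globally defined gauge\ndash fixing fermion.

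The main obstacle is genuinely this last globalization. The infinitesimal deformation of a smooth family of Lagrangians is a closed---but not necessarily exact---$1$\ndash form on $\calL_t$, so the generating functions $\psi_t$ exist only chart by chart, and one must check that the $\alpha$\ndash dependent local primitives assemble correctly. Concretely, substituting $\rho_\alpha\triangle(\dot\psi_\alpha f)=\triangle(\rho_\alpha\dot\psi_\alpha f)-(\triangle\rho_\alpha)\,\dot\psi_\alpha f+(\rho_\alpha,\dot\psi_\alpha f)$ via \eqref{e:trianglefg}, the first term drops out by the BV lemma (each factor being compactly supported in a chart), and in the exact case the remaining terms cancel upon summation because $\sum_\alpha\rho_\alpha=1$, $\triangle 1=0$, and $(1,\,\cdot\,)=0$. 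The careful point, and the only one requiring genuine work, is to verify that the globally well-defined variation density on $\calL_t$ is exact---so that the first part still applies---when the deformation is merely closed; this is where I would focus the argument.
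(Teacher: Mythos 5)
The paper itself offers no proof of Theorem~\ref{t:globBV}: it is stated in Appendix~\ref{a:app} with only the indication that the local BV lemma and Theorem~\ref{t:fundBV} extend globally via the fact, cited from \cite{Sch93}, that a half\ndash density restricts to a density on a Lagrangian submanifold. Your proposal implements exactly that indication, so there is no ``paper proof'' to diverge from. Part (1) is correct as written: a locally finite partition of unity subordinate to Darboux charts in which $\calL$ is a graph, the identification of $\Delta$ with $\triangle$ on coefficient functions (Theorem/Definition~\ref{td:Deltacan}), and the local BV lemma applied to each compactly supported piece $\rho_\alpha\tau$ give $\int_\calL\Delta\tau=0$, the chart presentations gluing by Schwarz's restriction map. (You silently interchange the sum over $\alpha$ with the integral, which needs local finiteness or compactness of $\calL$; the paper is no more careful about such integrability issues.)

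The genuine issue is the one you flag at the end of part (2), and you are right that it is the crux; but you stop one step short of closing it. The deformation of $\calL_t$ is governed by the closed $1$\ndash form $\iota_{V_t}\omega|_{\calL_t}$, which in each graph chart is $\pm\dd\dot\psi_\alpha$. Since $\omega$ is odd, $t$ is even, and $\psi_t$ is odd, the local primitives $\dot\psi_\alpha$ are \emph{odd} functions; on overlaps they differ by locally constant odd functions, which vanish---the very same parity argument the paper uses to kill $G_\mu$ in Corollary~\ref{c:Deltacan}, with the same caveat that it fails when working in families over a base admitting odd constants. Hence a global $\dot\psi_t$ always exists, $\dot I_t=\int_{\calL_t}\Delta(\dot\psi_t\,\sigma)=0$ by part (1), and no separate treatment of ``merely closed'' deformations is needed; your partition\ndash of\ndash unity manipulation in the last paragraph becomes superfluous. (Alternatively, the fully general statement follows from the conormal\ndash bundle plus Stokes argument the paper sketches immediately after the theorem, but that is a different route.)
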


In the case of an odd cotangent bundle $\Pi T^*M$, one can show that every Lagrangian submanifold is a smooth deformation of an odd conormal bundle $\Pi N^* C$, where $C$ is a submanifold of $M$. Moreover, one can show that 
\[
\int_{\Pi N^* C}\sigma = \int_C \phi(\sigma)
\]
for every half\ndash density $\sigma$. By Stokes' theorem and the characterization of the canonical BV operator as in Proposition~\ref{p:cancanBV} (i.e., $\phi\circ\Delta=\dd\circ\phi$), one gets that 
\[
\int_{\Pi N^* C_1}\sigma=\int_{\Pi N^* C_2}\sigma
\]
if $C_1$ and $C_2$ are homologous and $\Delta\sigma=0$.

By Schwarz' theorem one can then generalize the global BV theorem a bit.
\begin{Thm}
Let $\calM$ be an odd symplectic manifold and $\sigma$ a half\ndash density satisfying $\Delta\sigma =0$. Then
\[
\int_{\calL_1} \sigma=\int_{\calL_2} \sigma
\]
whenever $\calL_2$ can be obtained from $\calL_1$ by a combination of smooth deformations and homologous changes of the body, assuming that $\sigma$ is integrable on each intermediate step.
\end{Thm}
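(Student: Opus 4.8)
The plan is to bootstrap the two invariance mechanisms already established---invariance under smooth deformations and invariance under homologous changes of the body---and to use Schwarz's structure theory to reduce an arbitrary Lagrangian to a conormal bundle, where the homology argument becomes available. First I would use Schwarz's theorem \cite{Sch93} to replace $\calM$ by the odd cotangent bundle $\Pi T^*M$ of its body $M$. Since the canonical operator $\Delta$ is invariant under symplectomorphisms (Theorem/Definition~\ref{td:Deltacan}) and since the integral of a half-density over a Lagrangian is preserved under symplectomorphisms (the restriction of a half-density to a Lagrangian is a density, and a symplectomorphism carries Lagrangians to Lagrangians), both the hypothesis $\Delta\sigma=0$ and all the integrals in question transport faithfully to $\Pi T^*M$.

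Next I would invoke the structural fact recalled above that in $\Pi T^*M$ every Lagrangian submanifold is a smooth deformation of a conormal bundle $\Pi N^*C$ for some submanifold $C\subseteq M$. This lets me decompose the prescribed passage from $\calL_1$ to $\calL_2$ into a finite chain of elementary moves, each of exactly one of two types: a smooth deformation of Lagrangians, or---after first bringing both relevant endpoints to conormal form---a replacement of $\Pi N^*C_1$ by $\Pi N^*C_2$ with $C_1$ and $C_2$ homologous in $M$.

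It then remains to check that each elementary move leaves the integral unchanged. For a smooth deformation this is precisely part (2) of the global BV theorem (Theorem~\ref{t:globBV}), which applies because $\Delta\sigma=0$ and because $\sigma$ is assumed integrable at every intermediate step. For a homologous change of the body I would combine the identity $\int_{\Pi N^*C}\sigma=\int_C\phi(\sigma)$ with the characterization $\phi\circ\Delta=\dd\circ\phi$ from Proposition~\ref{p:cancanBV}: the closedness $\dd\,\phi(\sigma)=\phi(\Delta\sigma)=0$ together with Stokes' theorem yields $\int_{C_1}\phi(\sigma)=\int_{C_2}\phi(\sigma)$ whenever $C_1$ and $C_2$ are homologous. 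Stringing the moves together then gives $\int_{\calL_1}\sigma=\int_{\calL_2}\sigma$.

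The main obstacle I anticipate is bookkeeping rather than any new analytic input: one has to make precise what ``a combination of smooth deformations and homologous changes of the body'' means, and to verify that a general such combination factors into the two pure elementary moves above without ever leaving the regime where $\sigma$ is integrable. The homology argument is only licensed once a Lagrangian is in conormal form, so the delicate point is interleaving smooth deformations---which may temporarily move away from conormal bundles---with the discrete homology steps while keeping the integral well-defined throughout. The integrability hypothesis imposed at each intermediate step is exactly what is needed to make this interleaving legitimate.
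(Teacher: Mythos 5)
Your proposal is correct and follows essentially the same route the paper takes: reduce to $\Pi T^*M$ via Schwarz's theorem, handle smooth deformations with part (2) of the global BV theorem, and handle homologous changes of the body via the identity $\int_{\Pi N^*C}\sigma=\int_C\phi(\sigma)$ together with $\phi\circ\Delta=\dd\circ\phi$ and Stokes' theorem. The paper leaves this as a sketch in the paragraph preceding the statement, and your write-up simply makes the same chain of reductions (and the interleaving of the two kinds of moves) slightly more explicit.
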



\section{Historical remarks}\label{a:historem}
The BV formalism was developed by Batalin and Vilkovisky \cite{BV77,BV81} as a generalization of the BRST formalism \cite{BRS,T}, which in turn put gauge fixing and the Faddeev--Popov determinant \cite{FP67} into a cohomological setting. They constructud the BV Laplacian (as in our Section~\ref{s-standLap}) and proved their fundamental theorem (in our notes, Theorem~\ref{t:fundBV}). In addition, they showed that, under suitable assumptions, a physical action can be extended to a BV action $S$ satisfying the classical master equation $(S,S)=0$.

In \cite{Wit90} Witten recognized the relation between the BV operator on functions on an odd cotangent bundle and the divergence operator on multivector fields on the body (see the last paragraph of Section~\ref{s:oct}). The relation between the de~Rham differential and the divergence operator on multivector fields on an ordinary manifold was already known to Soviet mathematicians in the '80s (see, e.g., \cite{Kir80}). Also note that Bernshtein and Leites \cite{BL77} had already introduced in 1977 the notion of ``integral forms'' on supermanifolds---where this notion differs from that of differential forms---and defined the exterior differential for them as the divergence operator.


Khudaverdian \cite{Khu91} was the first to give the global definition of the BV Laplacian on functions, see our equation \eqref{e:Deltamudiv}.

Khudaverdian was also the first to observe that an odd symplectic manifold always admits global Darboux coordinates. This was used by Schwarz in \cite{Sch93}, where he also observed how to globalize and extend the BV theorem (in our notes, Theorem~\ref{t:globBV}).

Finally, in \cite{Khu99,Khu04},\footnote{Note that O. M. Khudaverdian and H. M. Khudaverdian are just different spellings of the same name.} Khudaverdian showed the existence of a canonical BV Laplacian on half-densities (in our notes, this corresponds to Theorem/Definition~\ref{td:Deltacan}).

In a series of papers (among others, \cite{KV02a,KV02b,KV06}), Khudaverdian and Voronov further clarified the properties of the canonical BV Laplacian, essentially covering all the constructions we present in these notes (and more). In \cite{KV02a} they extended the construction of the odd Laplacian on half-densities to odd Poisson manifolds. In \cite{KV02b} they discussed the principal and subprincipal symbols of the BV Laplacians and their transposed operators (see our Section~\ref{s:BVlapd}). In \cite{KV06} they defined the canonical BV Laplacian on half-densities on an odd cotangent bundle in terms of the de~Rham differential of the corresponding differential forms and then showed that this operator is invariant under all symplectomorphisms---not just those coming from the base.

It is worth mentioning that \v Severa \cite{Sev06} 
found a completely different construction for the canonical BV Laplacian. 
The main observation is that the complex of differential forms on $\calM$ has two commuting coboundary operators: the de~Rham differential $\dd$ and the operator $\delta\coloneqq \omega\wedge\ $. It turns out that the associated spectral sequence lives up to the $E_2$\ndash term. More precisely, \v Severa shows that $E_1=H_\delta(\calM)$ is canonically isomorphic to $\Dens^{\frac12}(\calM)$ and that the induced coboundary operator $\dd_1$ vanishes, which implies that $E_2=E_1=\Dens^{\frac12}(\calM)$. Finally, 
\v Severa proves that the canonically induced coboundary operator $\dd_2$ is the canonical BV Laplacian and that all higher coboundary operators vanish. 

\end{document}